\newcommand{\mycomment}[1]{}
\newcommand{\diverge}{\to\infty}
\newcommand{\reals}{{\mathbb{R}}}
\newcommand{\toas}{\xrightarrow{{\rm a.s.}}}
\newcommand{\pth}[1]{\left( #1 \right)}
\newcommand{\sth}[1]{\left\{ #1 \right\}}
\newcommand{\norm}[1]{\left\|{#1} \right\|}
\newcommand{\iprod}[2]{\left \langle #1, #2 \right\rangle}
\newcommand{\sfB}{{\mathsf{B}}}
\newcommand{\calE}{{\mathcal{E}}}
\newcommand{\calI}{{\mathcal{I}}}
\newcommand{\calO}{{\mathcal{O}}}
\newcommand{\calV}{{\mathcal{V}}}
\newcommand{\calX}{{\mathcal{X}}}
\newcommand{\argmin}{{\rm argmin}}
\begin{document}

\title{On the Convergence Rate of Average Consensus and Distributed Optimization over Unreliable Networks\thanks{This research is supported in part by National Science Foundation awards NSF 1329681 and 1421918.
Any opinions, findings, and conclusions or recommendations expressed here are those of the authors
and do not necessarily reflect the views of the funding agencies or the U.S. government.} 
\thanks{email: lilisu3@csail.mit.edu}}
\date{\today}

\author{Lili Su\\
\institute{EECS, MIT}
}
\maketitle

\begin{abstract}
We consider the problems of reaching average consensus and solving consensus-based optimization over unreliable communication networks wherein packets may be dropped accidentally during transmission.
Existing work either assumes that the link failures affect the communication on both directions or that the message senders {\em know exactly}, in each iteration, how many of their outgoing links are functioning properly. 
In this paper, we consider directed links, and we {\em do not} require each node know its current outgoing degree. 

First, we propose and characterize the convergence rate of reaching average consensus. 
%
Then we apply our robust consensus update to the classical distributed dual averaging method wherein the consensus update is used as the information aggregation primitive. 
%
We show that the local iterates converge to a common optimum of the global objective at rate $O(\frac{1}{\sqrt{t}})$, where $t$ is the number of iterations, matching the failure-free performance of the distributed dual averaging method.

\end{abstract}

\section{Introduction}
\label{sec: intro}
Reaching consensus and solving distributed optimization are two closely related global tasks of multi-agent networks. In the former, every agent has a private input, and the common goal of the networked agents is to reach an agreement on a value that is determined by these private inputs such as maximum, minimum, average, etc.  In the latter, typically, every agent has a private cost function, and the common goal is to collaboratively minimize a global objective which is some proper aggregation of these private cost functions.

Average consensus has received intensive attention \cite{6571230,hadjicostis2016robust,xiao2007distributed} partially because one can use average consensus as a way to aggregate agents' private information. Different strategies to robustify reaching average consensus against unreliable networks have been proposed \cite{patterson2007distributed,fagnani2009average,YinChen2010,vaidya2012robust,Eyal2012}. 
Undirected graphs were considered in \cite{patterson2007distributed,Eyal2012}, where the link failures affect the communication on both directions.  Dynamically changing data and networks are considered in \cite{Eyal2012}.
Directed graphs were first considered in \cite{fagnani2009average}, however, only biased averages were achieved. 
Later work corrected these biases \cite{YinChen2010,vaidya2012robust} via introducing auxiliary variables at each agent; however, only asymptotic convergence was shown. To the best of our knowledge, non-asymptotic convergence rate is still lacking \footnote{As indicated by the common arXiv code, \cite{su2016robustOV} is only an early version of this work.}.

Consensus-based multi-agent optimization is an important family of distributed optimization algorithms. In a typical consensus-based multi-agent optimization problem \cite{Duchi2012,Nedic2009,nedic2015distributed,Tsianos2012CDC}, each agent $i$ keeps a {\em private} cost function $h_i: \calX \to \reals$, 
and the networked agents collectively want to reach agreement on a global decision $x^*\in \calX$ such that the average of these private cost functions is minimized, i.e., 
\begin{align*}
x^* ~ \in ~ \argmin_{x\in \calX} ~ \frac{1}{n} \sum_{i=1}^n h_i(x), 
\end{align*}
where $n$ is the total number of agents in the system. The applications of such distributed optimization problems include distributed machine learning and distributed resource allocation. 
Robustifying distributed optimization against link failures has received some attention recently \cite{Duchi2012,nedic2015distributed}. Duchi et al.\ \cite{Duchi2012} assumed that each realizable link failure pattern admits a doubly-stochastic matrix. 
Assuming each agent knows, in each iteration, the number of outgoing links that are working properly \cite{nedic2015distributed}, the requirement for doubly stochastic matrices was removed by incorporating the push-sum mechanism. However, the implementation of push-sum in \cite{nedic2015distributed} implicitly assumed the adoption of acknowledgement mechanism.

In this work, we consider directed links, and we {\em do not} require each node know its current outgoing degree. 
That is, the message losses might be asymmetric between a pair of agents, and if a message packet is dropped by a link, the sender is not aware of this loss. 
Although acknowledge mechanisms can be incorporated to improve reliability, it may slow down the convergence due to the necessity of message retransmission. 
In this paper, we first propose and characterize the convergence rate of reaching average consensus in the presence of packet-dropping link failures. 
Then we apply our robust consensus update to the classical distributed dual averaging method wherein the consensus update is used as the information aggregation primitive. 
We show that the local iterates converge to a common optimum of the global objective at rate $O(\frac{1}{\sqrt{t}})$, where $t$ is the number of iterations, matching the failure-free performance of the distributed dual averaging method. 

\section{Network Model and Notation}

We consider a synchronous system which consists of $n$ networked agents. The network structure is represented by a {\em strongly connected} graph $G(\calV,\calE)$, where $\calV=\{1,\dots,n\}$ is the collection of agents, and $\calE$ is the collection of {\em directed} communication links. 
Let $\calI_i=\{ j ~|~ (j,i) \in \calE\}$ and $\calO_i=\{j~|~(i,j)\in\calE\}$ be the sets of incoming neighbors and outgoing neighbors, respectively, of agent $i$.
For ease of exposition, we assume no self-loops exist, i.e., $i\notin \calI_i \cup \calO_i,  \forall i\in\calV$.
For $i\in \calV$, let $d^{o}_i=|\calO_i|$. The communication links are unreliable in that packets may be dropped during transmission unexpectedly. However, a given link is assumed to be {\em operational} at least once during $B$ consecutive iterations, where $B\ge 1$.
Similar assumption is adopted in \cite{nedic2015distributed,Nedic2009}.

\section{Robust Average Consensus}
Reaching average consensus in directed networks has been intensively studied \cite{1333204,aysal2009broadcast,kempe2003gossip}.
In particular, in Push-Sum \cite{kempe2003gossip,benezit2010weighted}, each networked agent updates two coupled iterates and the ratio of these two iterates approaches the average asymptotically. The correctness of Push-Sum crucially relies on ``mass preservation"  (specified later) of the system. However, when the communication links suffer packet-dropping failures, the desired ``mass preservation" may not hold. 
Robustification method has been introduced to recover the dropped ``mass" \cite{hadjicostis2016robust,hadjicostis2014average}, where auxiliary variables are introduced to {\em record} the total ``mass" sent and delivered, respectively, through a given communication link. However, only asymptotic convergence is provably guaranteed \cite{hadjicostis2016robust,hadjicostis2014average}. In this section, we focus on characterizing the convergence rate of robust average consensus.
To do that, we need to provide an algorithmic fix of the robust Push-Sum proposed in \cite{hadjicostis2016robust}. This simple fix allows us to use the standard matrix product analysis to show convergence. Note that better convergence rates might be obtained by carefully exploring the structures of the communication graphs.

\subsection{Robust Push-Sum}
In this subsection, we briefly review the Push-Sum algorithm \cite{kempe2003gossip,benezit2010weighted} and its robust variant \cite{hadjicostis2016robust}.
\begin{algorithm}
\caption{Push-Sum \cite{kempe2003gossip,benezit2010weighted}}
\label{ps 1}

{\em Initialization}: $z_i[0]=y_i\in \reals^d$, $w_i[0]=1\in \reals$.\\
%

\For{$t\ge 1$}{
%
Broadcast $\frac{z_i[t-1]}{d_i^{o}+1}$ and $\frac{w_i[t-1]}{d_i^{o}+1}$ to all outgoing neighbors\;
$z_i[t] \gets \sum_{j\in \calI_i \cup \{i\}} \frac{z_j[t-1]}{d_j^{o}+1}$, and $ w_i[t] \gets \sum_{j\in \calI_i \cup \{i\}} \frac{w_j[t-1]}{d_j^{o}+1}$.
}
\end{algorithm}
In standard Push-Sum \cite{kempe2003gossip,benezit2010weighted}, described in Algorithm \ref{ps 1}, each agent $i$ runs two iterates:
\begin{itemize}
\item  value sequence $\{z_i[t]\}_{t=0}^{\infty}$, and
\item  weight sequence $\{w_i[t]\}_{t=0}^{\infty}$,
\end{itemize}
where $z_i[0]=y_i\in \reals^d$ is the private input, and $w_i[0]=1\in \reals$ is the initial weight  of agent $i$. The weight sequences $\{w_i[t]\}_{t=0}^{\infty}$ are introduced to relax the need for doubly stochastic matrices. 
In a sense, the weights are used to correct the ``bias" caused by the network structure.
In each iteration of Algorithm \ref{ps 1}, each agent $i$ divides both the local value $z_i$ and local weight $w_i$ by $d_i^o+1$, recalling that $d_i^o$ is the out-degree of agent $i$ in the fixed $G(\calV, \calE)$. Among the $d_i^o+1$ parts of the value fractions $\frac{z_i}{d_i^o+1}$ and the weight fractions $\frac{w_i}{d_i^o+1}$, agent $i$ sends $d_i^o$ parts to its outgoing neighbors and maintains one part itself. Upon receiving the value fractions and the weight fractions from its incoming neighbors, agent $i$ sums them up respectively.
When the communication network is reliable, 
the ratio of the value and the weight converges to the average of the private inputs, i.e.,
\begin{align}
\label{convergence ps}
 \lim_{t\diverge} \frac{z_i[t]}{w_i[t]} =  \frac{1}{n} \sum_{j=1}^{n} y_j, ~~~\forall ~i =1, \cdots, n.
\end{align}
The correctness of Push-Sum algorithm relies crucially on the {\em mass preservation} of the system \cite{Benezit,kempe2003gossip}, which says that the total weights kept by the agents in the system sum up to $n$ at every iteration, i.e.,
\begin{align}
\label{mass preserv}
\sum_{i=1}^n w_i[t] ~ =~ n, ~~~ \forall ~ t.
\end{align}
Unfortunately, \eqref{mass preserv} does not hold in the presence of packet-dropping link failures. Nevertheless, as illustrated in \cite{hadjicostis2016robust} (also described below in Algorithm \ref{rps 0}), if we are able to keep track of the dropped ``mass", we are able to show that the total mass is preserved in some {\em augmented graph}, where virtual agents/nodes are introduced. 
Though it is tempting to view running Algorithm \ref{rps 0} as running standard push-sum on the augmented graph, this might not be true. As can be seen later, the dynamics under Algorithm \ref{rps 0}, in the current form, are unstable. 
In this paper, we provide a simple algorithmic fix of Algorithm \ref{rps 0}. 

\begin{algorithm}
\caption{Robust Push-Sum \cite{hadjicostis2016robust}}
\label{rps 0}
{\em Initialization}: $z_i[0]=y_i\in \reals^d$, $w_i[0]=1\in \reals,$ $\sigma_i[0]={\bf 0}\in \reals^d$, $\tilde{\sigma}_i[0]=0\in \reals$, and $\rho_{ji}[0]={\bf 0}\in \reals^d$, $\tilde{\rho}_{ji}[0]=0\in \reals$ for each incoming link, i.e., $j\in \calI_i$.

\For{$t\ge 1$}
{\vskip 0.5\baselineskip
$\sigma_i[t]  \gets  \sigma_i[t-1] + \frac{z_i[t-1]}{d_i^{o}+1}$, $
\tilde{\sigma}_i[t] \gets \tilde{\sigma}_i[t-1]+\frac{w_i[t-1]}{d_i^{o}+1}$\;

Broadcast $\pth{\sigma_i[t], \tilde{\sigma}_i[t]}$ to outgoing neighbors\;
\For {each incoming link $(j,i)$}
{\eIf{message $\pth{\sigma_j[t], \tilde{\sigma}_j[t]}$ is received}
{$\rho_{ji}[t] \gets \sigma_j[t]$, ~~ $\tilde{\rho}_{ji}[t] \gets \tilde{\sigma}_j[t]$\;}
{ $\rho_{ji}[t] \gets \rho_{ji}[t-1]$, ~~$\tilde{\rho}_{ji}[t] \gets \tilde{\rho}_{ji}[t-1]$\;}
$ z_i[t] \gets \sum_{j\in \calI_i\cup\{i\}} \pth{\rho_{ji}[t] - \rho_{ji}[t-1]}$, and $w_i[t]  \gets \sum_{j\in \calI_i\cup\{i\}} \pth{\tilde{\rho}_{ji}[t] -\tilde{\rho}_{ji}[t-1]}$.
}
\vskip 0.5\baselineskip
}
\end{algorithm}

Similar to the standard Push-Sum, in Algorithm \ref{rps 0}, each agent $i$ wants to share with its outgoing neighbors of its value fraction $\frac{z_i}{d_i^o+1}$ and weight fraction $\frac{w_i}{d_i^o+1}$. If agent $i$ sends these two fractions out directly, 
the total mass will not be preserved. In order to recover the ``mass" dropped by an incoming link, in addition to $z_i[t]$ and $w_i[t]$, each agent $i$ uses variable $\tilde{\sigma}_i[t]$ to record the cumulative weight (up to iteration $t$) sent through each outgoing link, and uses variable $\sigma_i[t]$ for the corresponding quantity of the value sequence. In particular,
\begin{align}\label{sent cumulative}
\nonumber
 \sigma_i[t] & =  \sigma[t-1] + \frac{z_i[t-1]}{d_i^{o} +1}, ~~ \text{and}\\
\tilde{\sigma}_i[t] & =\tilde{\sigma}_i[t-1] + \frac{w_i[t-1]}{d_i^{o} +1},
\end{align}
with $\sigma_i[0]={\bf 0}\in \reals^d$, and $\tilde{\sigma}_i[0]=0\in \reals$. In each iteration, agent $i$ broadcasts the tuple $\pth{\sigma_i[t], \tilde{\sigma}_i[t]}$ to all of its outgoing neighbors. 
To record the cumulative information {\em delivered} via the link $(i,k)$, the outgoing neighbor $k$ uses a pair of variables $\rho_{ik}[t]$ and $\tilde{\rho}_{ik}[t]$, with $\rho_{ik}[0]={\bf 0}\in \reals^d$ and $\tilde{\rho}_{ik}[0]=0\in \reals$. If the link $(i,k)$ is operational, i.e., the tuple $\pth{\sigma_i[t], \tilde{\sigma}_i[t]}$ is successfully delivered, then
\begin{align*}
\rho_{ik}[t]  = \sigma_i[t], ~\text{and} ~   \tilde{\rho}_{ik}[t]  = \tilde{\sigma}_i[t].
\end{align*}
Otherwise, since no new message is delivered, both $\rho_{ik}[t]$ and $\tilde{\rho}_{ik}[t]$ are unchanged. In summary, if the link is operational at a given iteration, then
\begin{align*}
 \text{total ``mass" sent} ~~ = ~~ \text{total ``mass" delivered};
\end{align*}
Otherwise,
\begin{align*}
 \text{total ``mass" sent} ~~ \not= ~~ \text{total ``mass" delivered}.
\end{align*}
%
In addition, if the link $(i,k)$ is operational at iteration $t$, it holds that
\begin{align}
\label{s1}
\rho_{ik}[t] - \rho_{ik}[t-1] ~~ &= ~~ \sum_{r=t^{\prime}}^{t-1} ~\frac{z_i[r]}{d_i^{o}+1}, ~~\text{and}\\
\label{s2}
\tilde{\rho}_{ik}[t] -\tilde{\rho}_{ik}[t-1] ~~ &= ~~  \sum_{r=t^{\prime}}^{t-1} ~\frac{w_i[r]}{d_i^{o}+1},
\end{align}
where $t^{\prime}$ is the immediately preceding iteration of $t$ such that link $(i,k)$ is operational. As a link is reliable at least once during $B$ consecutive iterations, it holds that $t-t^{\prime}\le B$.
Under Algorithm \ref{rps 0}, it has been shown that \cite{hadjicostis2016robust}, at each agent $i$,
\begin{align*}
\frac{z_i[t]}{w_i[t]} ~ \toas ~ \frac{1}{n}\sum_{i=1}^n y_i, ~~ \text{as }t\diverge.
\end{align*}
However, no convergence rate (asymptotic or non-asymptotic) is given. Informally speaking, this is because the dynamics of the system under Algorithm \ref{rps 0} is not stable enough. In particular, in the augmented graph constructed in \cite{hadjicostis2016robust} (formally defined later), the two iterates ``kept" by the virtual agents are reset to zero periodically and unexpectedly. 
This ``reset" causes non-trivial technical challenges. In particular, the corresponding matrix product does not converge to a rank one matrix.

\subsection{Convergent Robust Push-Sum}
In this subsection, we propose a simple algorithmic fix of Algorithm \ref{rps 0}. We refer to our algorithm as {\em Convergent Robust Push-Sum}, described in Algorithm \ref{alg:ps convergence rate}. 
Note that this does not mean that our  Algorithm \ref{alg:ps convergence rate} is superior or inferior to Algorithm 2  \cite{hadjicostis2016robust}. 

Our Algorithm \ref{alg:ps convergence rate} has the same set of variables as that in Algorithm \ref{rps 0}. For ease of exposition, we use $\sigma^{+}_i[t]$, $\tilde{\sigma}^{+}_i[t]$, $z_i^+[t]$, and $w_i^{+}[t]$ to emphasize the fact that they are intermediate values of corresponding quantities in an iteration. 
%
\begin{algorithm}
\caption{Convergent Robust Push-Sum}
\label{alg:ps convergence rate}
{\em Initialization}: $z_i[0]=y_i\in \reals^d$, $w_i[0]=1\in \reals,$ $\sigma_i[0]={\bf 0}\in \reals^d$, $\tilde{\sigma}_i[0]=0\in \reals$, and $\rho_{ji}[0]={\bf 0}\in \reals^d$, $\tilde{\rho}_{ji}[0]=0\in \reals$ for each incoming link, i.e., $j\in \calI_i$.

\For{$t\ge 1$}
{
$\sigma^{+}_i[t]  \gets  \sigma_i[t-1] + \frac{z_i[t-1]}{d_i^{o}+1}$,
$\tilde{\sigma}^{+}_i[t]  \gets  \tilde{\sigma}_i[t-1] + \frac{w_i[t-1]}{d_i^{o}+1}$\;

Broadcast $\pth{\sigma^{+}_i[t], \tilde{\sigma}^{+}_i[t]}$ to outgoing neighbors\;

\For {each incoming link $(j,i)$}
{\eIf{message $\pth{\sigma^{+}_j[t], \tilde{\sigma}^{+}_j[t]}$ is received}
{$\rho_{ji}[t] \gets \sigma^{+}_j[t]$, ~~ $\tilde{\rho}_{ji}[t] \gets \tilde{\sigma}^{+}_j[t]$\;}
{ $\rho_{ji}[t] \gets \rho_{ji}[t-1]$, ~~$\tilde{\rho}_{ji}[t] \gets \tilde{\rho}_{ji}[t-1]$\;}
$ z_i^{+}[t] \gets \frac{z_i[t-1]}{d_i^{o}+1} +  \sum_{j\in \calI_i} \pth{\rho_{ji}[t] - \rho_{ji}[t-1]}$, 
$w_i^{+}[t]  \gets \frac{w_i[t-1]}{d_i^{o}+1} + \sum_{j\in \calI_i} \pth{\tilde{\rho}_{ji}[t] -\tilde{\rho}_{ji}[t-1]}$.
}

$\sigma_i[t]  \gets  \sigma^{+}_i[t] + \frac{z_i^{+}[t]}{d_i^{o}+1}$,
$\tilde{\sigma}_i[t]  \gets  \tilde{\sigma}^{+}_i[t] + \frac{w_i^{+}[t]}{d_i^{o}+1}$,
$z_i[t]  \gets \frac{z_i^+[t]}{d_i^{o}+1}$,
$w_i[t] \gets \frac{w_i^+[t]}{d_i^{o}+1}$.
}
\end{algorithm}
In each iteration of our Algorithm \ref{alg:ps convergence rate}, the cumulative transmitted value and weight $(\sigma_i, \tilde{\sigma}_i)$, and the local value and weight $(z_i, w_i)$ are updated twice, with the first update being identical to that in Algorithm \ref{rps 0}. As mentioned before, with only this first update, the dynamics in the system is not stable enough, as the two iterates ``kept" by the virtual agents are reset to zero periodically and unexpectedly. This ``reset" is prevented by the second update in our Algorithm \ref{alg:ps convergence rate}. Intuitively speaking, in the second update, each agent pushes nonzero ``mass" to the virtual agents on its outgoing links. As a result of this, the two iterates ``kept" by a virtual agent will never be zero at the end of an iteration.
\subsection{Augmented Graph}
%
The augmented graph of a given $G(\calV, \calE)$, denoted as $G^a(\calV^a, \calE^a)$, is constructed as follows \cite{vaidya2012robust}:
\begin{enumerate}
\item $\calV^a=\calV\cup \calE$, i.e., $|\calE|$ additional auxiliary agents are introduced, each of which represents a link in $G(\calV, \calE)$. For ease of notation, we use $n_{ij}$ to denote the virtual agent corresponding to edge $(i,j)$.
\item $\calE\subseteq \calE^a$, i.e., the edge set in $G^a(\calV^a, \calE^a)$ preserves the topology of $G(\calV, \calE)$;
\item Additionally, auxiliary edges are introduced: each auxiliary agent $n_{ij}$ has one incoming neighbor -- agent $i$ -- and one outgoing neighbor -- agent $j$.
\end{enumerate}
\begin{figure}
\centering
\subfigure[Original graph]{
\label{figOR}
\includegraphics[width=1.8 in]{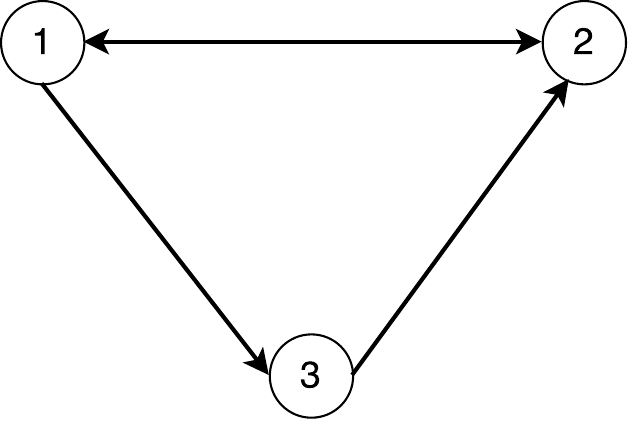}}
\quad \quad \quad \quad 
\subfigure[Augmented graph]{
\label{figORA}
\includegraphics[width=2 in]{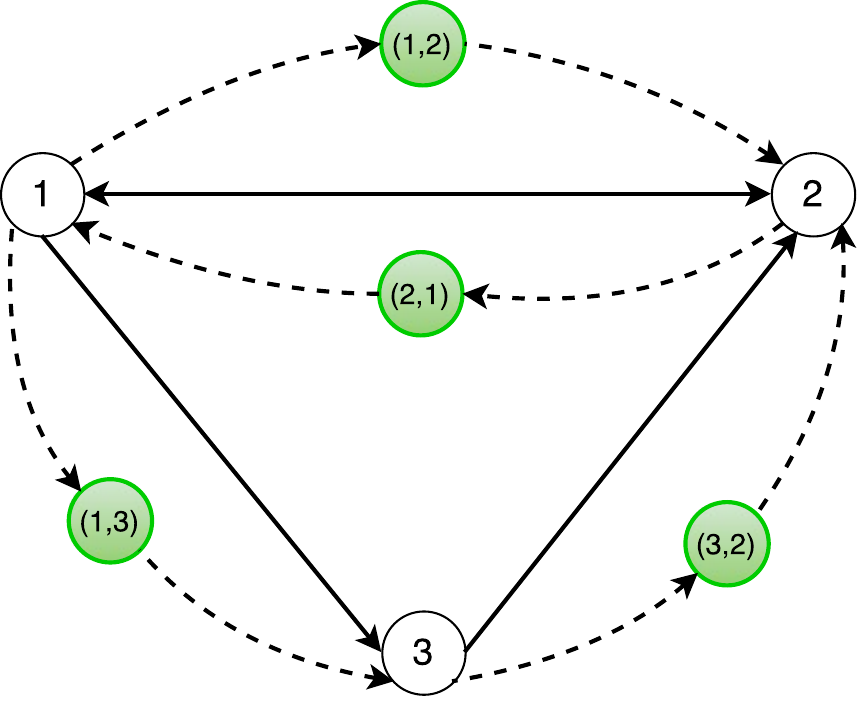}}
\caption{For each directed link, a buffer agent is added.}
\label{Fig.graph}
\end{figure}
As shown in Fig.\ \ref{Fig.graph}, 
in the augmented graph (i.e., Fig.\ \ref{figORA}), four additional agents are introduced, each of which corresponds to a directed edge of the original graph. 
%

\subsection{Matrix Representation}
For each link $(j,i)\in\calE$, and $t\ge 1$, define the indicator variable $\sfB_{(j,i)}[t]$ as follows:
\begin{align}
\sfB_{(j,i)}[t] \triangleq \left\{
  \begin{tabular}{ll}
  1, ~ if link $(j,i)$ is reliable at time $t$; \\
  0, ~ otherwise.  \label{indicator_var}
  \end{tabular}
  \right.
\end{align}
Recall that $z_i$ and $w_i$ are the value and weight for $i\in \calV =\{1, \cdots, n\}$. For each $(j,i)\in \calE$, we define $z_{n_{ji}}$ and $w_{n_{ji}}$ as
\begin{align}
z_{n_{ji}}[t]~&\triangleq ~\sigma_j[t]-\rho_{ji}[t], ~\text{and}~  \label{z virtual}\\
w_{n_{ji}}[t]~&\triangleq ~\tilde{\sigma}_j[t]-\tilde{\rho}_{ji}[t], \label{w virtual}
\end{align}
with $z_{n_{ji}}[0]={\bf 0}\in \reals^d$ and $w_{n_{ji}}[0]=0\in \reals$.

Let $m=n+|\calE|$. 
We next show that the evolution of $z$ and $w$ can be described in a matrix form. 
 Since the update of value $z$ and weight $w$ are identical, for ease of exposition, henceforth, we focus on the value sequence $z$.

From steps 6 -- 10 of Algorithm \ref{alg:ps convergence rate}, we know
\begin{align}
\rho_{ji}[t]&=\sfB_{(j,i)}[t]\sigma^+_j[t]+(1-\sfB_{(j,i)}[t])\rho_{ji}[t-1],\label{e_rho_reform 1}. 
\end{align}
By \eqref{indicator_var}, \eqref{z virtual} and \eqref{e_rho_reform 1},  for each $i\in \calV$, the update of $z_i$ is
\begin{align}
\label{z[t]_reform 1}
\begin{cases}
z^+_{i}[t]&=\frac{z_{i}[t-1]}{d_i^o+1} + \sum_{j\in\calI_i}\sfB_{(j,i)}[t]\left(\frac{z_j[t-1]}{d_j^{o}+1}+z_{n_{ji}}[t-1]\right),\\
z_{i}[t] &= \frac{z^+_{i}[t]}{d_i^o+1}.
\end{cases}
\end{align}
Thus,
\begin{align}\label{matrix zi new}
z_i[t] &= \frac{z_{i}[t-1]}{\pth{d_i^o+1}^2}  + \sum_{j\in \calI_i} \frac{\sfB_{(j,i)}[t]}{\pth{d_i^o+1}\pth{d_j^o+1}} z_j[t-1] + \sum_{j\in \calI_i} \frac{\sfB_{(j,i)}[t]}{\pth{d_i^o+1}} z_{n_{ji}}[t-1].
\end{align}
%
Similarly, we get
\begin{align}\label{matrix z edge new}
\nonumber
z_{n_{ji}}[t]~ &= \frac{z_{j}[t-1]}{\pth{d_j^o+1}^2}  + \sum_{k\in \calI_j} \frac{\sfB_{(k,j)}[t]}{\pth{d_k^o+1}\pth{d_j^o+1}} z_k[t-1]  
+ \sum_{k\in \calI_j} \frac{\sfB_{(k,j)}[t]}{d_j^o+1} z_{n_{kj}}[t-1]\\
&\quad 
+ \frac{1- \sfB_{(j,i)}[t]}{d_j^o+1} z_j[t-1] + \pth{1-\sfB_{(j,i)}[t]} z_{n_{ji}}[t-1].
\end{align}
The detailed derivation of \eqref{matrix z edge new} can be found in Appendix \ref{app: proof of eqn}. 
Thus, we construct a matrix ${\bf M}[t]\in \reals^{m\times m}$ with the following structure:  
\begin{align}
\nonumber
&{\bf M}_{i,i}[t]\triangleq \frac{1}{\pth{d_i^{o}+1}^2}; \\
\nonumber
&{\bf M}_{j,i}[t]\triangleq \frac{\sfB_{(j,i)}[t]}{\pth{d_i^{o}+1}\pth{d_j^{o}+1}}, ~ \forall ~ j\in \calI_i; \\
\nonumber
&{\bf M}_{n_{ji},i}[t]\triangleq \frac{\sfB_{(j,i)}[t]}{d_i^{o}+1}, ~ \forall ~ j\in \calI_i;\\
\nonumber
&{\bf M}_{j, n_{ji}}[t] \triangleq \frac{1}{\pth{d_j^{o}+1}^2} + \frac{1-\sfB_{(j,i)}[t]}{d_j^{o}+1}; \\
\nonumber
&{\bf M}_{k, n_{ji}}[t] \triangleq \frac{\sfB_{(k,j)}[t]}{\pth{d_k^{o}+1}\pth{d_j^{o}+1}}, ~~ \forall ~ k\in \calI_j; \\
\nonumber
&{\bf M}_{n_{kj}, n_{ji}}[t] \triangleq \frac{\sfB_{(k,j)}[t]}{d_j^{o}+1},~~ \forall ~ k\in \calI_j; \\
&{\bf M}_{n_{ji}, n_{ji}}[t] \triangleq 1- \sfB_{(j,i)}[t].
\label{matrix con 2}
\end{align}
and any other entry in ${\bf M}[t]$ be zero.
It is easy to check that the obtained matrix ${\bf M}[t]$ is row stochastic. 
Let ${\bf \Psi}(r,t)$ be the product of $t-r+1$ row-stochastic matrices
\begin{align*}
{\bf \Psi}(r,t)&\triangleq \prod_{\tau=r}^t\, {\bf M}[\tau]={\bf M}[r] {\bf M}[r+1]\cdots {\bf M}[t],
\end{align*}
with $r\le t$. In addition, ${\bf \Psi}(t+1,t)\triangleq {\bf I}$ by convention. 

For ease of exposition, without loss of generality, let us fix a one-to-one mapping between $\{n+1, \cdots, m\}$ and $(j,i)\in \calE$. Thus, for each non-virtual agent $i\in \calV =\{1, \cdots, n\}$, we have
\begin{align}
z_i[t] 
& = \sum_{j=1}^{m} z_j[0] {\bf \Psi}_{ji}(1,t) = \sum_{j=1}^{n} y_j {\bf \Psi}_{ji}(1,t),
\label{evoz}
\end{align}
where the last equality holds due to $z_j[0] =y_j$ for $i\in \calV$ and $z_j[0]=0$ for $j\notin \calV$. 
%
%
%
%
%
Similar to \eqref{evoz}, for the weight evolution, for each $i\in \{1, \cdots, m\}$, we have
\begin{align}
w_i[t]
=\sum_{j=1}^n w_j[0] {\bf \Psi}_{ji}(1,t),
\label{evow}
\end{align}
%
Using ergodic coefficients and some celebrated results obtained by Hajnal \cite{Hajnal58}, we show the following thoerem. 
\begin{theorem}
\label{rps convergence rate}
Under Algorithm \ref{alg:ps convergence rate}, at each agent $i\in \calV =\{1, \cdots, n\}$,
\begin{align*}
\norm{\frac{z_i[t]}{w_i[t]} -\frac{1}{n}\sum_{k=1}^n y_k} \le  \frac{\sum_{k=1}^{n}y_k}{n\beta^{nB+1}}\gamma^{\lfloor\frac{t}{nB+1}\rfloor},
\end{align*}
where $\beta \triangleq \frac{1}{\max_{i\in \calV} (d_i^{o}+1)^2}$ and $\gamma \triangleq 1-\beta^{nB+1}$
\end{theorem}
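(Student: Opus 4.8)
The plan is to reduce the statement to a geometric contraction of the column of the row‑stochastic product ${\bf\Psi}(1,t)$ that a non‑virtual agent sees, to obtain that contraction from a ``scrambling'' property of every block of $nB+1$ consecutive matrices (using strong connectivity of $G$ and the $B$‑bounded link‑recovery hypothesis) together with Hajnal's submultiplicativity of the coefficient of ergodicity, and to establish a matching uniform lower bound on the weights. Write $\bar y:=\frac1n\sum_{k=1}^n y_k$. Since $w_j[0]=1$ for $j\in\calV$ and $z_j[0]=w_j[0]=0$ otherwise, \eqref{evoz}--\eqref{evow} give
\begin{align*}
\frac{z_i[t]}{w_i[t]}-\bar y=\frac{1}{w_i[t]}\sum_{j=1}^n(y_j-\bar y)\,{\bf\Psi}_{ji}(1,t).
\end{align*}
As $\sum_{j=1}^n(y_j-\bar y)=0$, the sum on the right is unchanged when each ${\bf\Psi}_{ji}(1,t)$ is shifted by the common constant $\min_{1\le b\le m}{\bf\Psi}_{bi}(1,t)$, so each relevant factor lies in $[0,\delta_i(t)]$ with $\delta_i(t):=\max_b{\bf\Psi}_{bi}(1,t)-\min_b{\bf\Psi}_{bi}(1,t)$; splitting the sum according to the sign of $y_j-\bar y$ and using nonnegativity of the inputs yields
\begin{align*}
\norm{\frac{z_i[t]}{w_i[t]}-\bar y}\;\le\;\frac{\delta_i(t)}{w_i[t]}\sum_{k=1}^n y_k .
\end{align*}
It therefore suffices to prove (i) $w_i[t]\ge n\beta^{nB+1}$ and (ii) $\delta_i(t)\le\gamma^{\lfloor t/(nB+1)\rfloor}$.

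The core step is the following scrambling claim: \emph{for every $s\ge 1$ and every $c\in\calV$, each of the $m$ entries of the $c$‑th column of ${\bf\Psi}(s,s+nB)$ is at least $\beta^{nB+1}$.} Every nonzero entry of each ${\bf M}[\tau]$ is $\ge\beta$, so it is enough to exhibit, from each node $b$, one positive‑weight walk of length exactly $nB+1$ ending at $c$. If $b\in\calV$, the self‑loop ${\bf M}_{b,b}[\tau]=(d_b^o+1)^{-2}>0$ lets the walk wait at $b$, and $G$ contains a directed path from $b$ to $c$ with at most $n-1$ hops; before each hop the walk waits (self‑loop) until that link becomes operational, which takes at most $B$ steps, for a total of at most $(n-1)B$ steps, after which it waits at $c$ to length $nB+1$. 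If $b=n_{uv}$ is virtual, row $b$ of ${\bf M}[\tau]$ always has a positive entry --- the self‑loop $1-\sfB_{(u,v)}[\tau]$ when link $(u,v)$ is down, and ${\bf M}_{n_{uv},v}[\tau]=\sfB_{(u,v)}[\tau](d_v^o+1)^{-1}$ into the non‑virtual agent $v$ when it is up --- so within $B$ steps the walk reaches $v\in\calV$ and then proceeds as above, for a total of at most $B+(n-1)B=nB\le nB+1$ steps. (This is exactly where the second update of Algorithm~\ref{alg:ps convergence rate} is needed: it keeps the virtual iterates from being zeroed, which is what guarantees the always‑present outgoing transition of a virtual node used above; Algorithm~\ref{rps 0} lacks it, and accordingly its matrix product need not stabilise.)

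Given the claim, (i) follows for $t\ge nB+1$ by writing ${\bf\Psi}(1,t)={\bf\Psi}(1,t-nB-1)\,{\bf\Psi}(t-nB,t)$: the claim makes column $i$ of the right factor entrywise $\ge\beta^{nB+1}$, and left‑multiplication by the row‑stochastic ${\bf\Psi}(1,t-nB-1)$ preserves this entrywise bound, so ${\bf\Psi}_{ji}(1,t)\ge\beta^{nB+1}$ for all $j$ and $w_i[t]=\sum_{j=1}^n{\bf\Psi}_{ji}(1,t)\ge n\beta^{nB+1}$ (for $t<nB+1$ the bound is checked directly, using e.g.\ ${\bf\Psi}_{ii}(1,t)\ge\beta^{t}$ and adjusting the constant). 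For (ii), cut ${\bf\Psi}(1,t)$ from the left into $q:=\lfloor t/(nB+1)\rfloor$ blocks of $nB+1$ consecutive matrices followed by a shorter remainder $R$. Left‑multiplication by any row‑stochastic matrix does not increase the spread of any column, and left‑multiplication by a matrix $A$ that owns a column with all entries $\ge\beta^{nB+1}$ multiplies every column's spread by at most $\tau(A)\le 1-\beta^{nB+1}=\gamma$ --- this is precisely Hajnal's coefficient‑of‑ergodicity estimate \cite{Hajnal58}, applicable because each of the $q$ blocks has such a column by the claim. Starting from $\delta_i(R)\le 1$ and applying the $q$ contracting blocks gives $\delta_i(t)\le\gamma^{q}$, which is (ii).

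I expect the scrambling claim to be the main obstacle: it forces one to read off the precise structure of ${\bf M}[\tau]$ on the augmented graph --- in particular the out‑transitions of the virtual agents and the role of the second update of Algorithm~\ref{alg:ps convergence rate} --- and to verify that, under only the $B$‑bounded link‑recovery hypothesis, a positive‑weight route of the prescribed length reaches a fixed non‑virtual agent from every node, virtual nodes included. Once this is in hand, the coefficient‑of‑ergodicity contraction, the weight lower bound, and the concluding arithmetic are routine.
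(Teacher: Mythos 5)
Your proposal is correct and follows essentially the same route as the paper: express $z_i[t]/w_i[t]$ through the backward product ${\bf\Psi}(1,t)$ of the row-stochastic matrices ${\bf M}[\tau]$, obtain geometric decay of the column spread over blocks of $nB+1$ matrices via Hajnal's coefficient of ergodicity, and divide by the weight lower bound $n\beta^{nB+1}$ coming from the entrywise bound on such products (the paper's Lemmas~\ref{c1} and~\ref{c2} combined with Proposition~\ref{claim_delta}). The only substantive difference is that you actually prove the entrywise/scrambling lower bound by an explicit positive-weight walk construction on the augmented graph (including the virtual agents' always-positive out-transition guaranteed by the second update), whereas the paper omits the proof of Lemma~\ref{c1} as ``standard''.
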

Here we use $\norm{\cdot}$ to denote $\ell_2$ norm. The proof of Theorem \ref{rps convergence rate} can be found in Appendix \ref{app: proof of thm1}. 
Note that the above convergence rate might not be tight. Better rates might be obtained by carefully exploring the structures of the communication graphs.

\section{Robust Distributed Dual Averaging Method}
We apply Algorithm \ref{alg:ps convergence rate} to distributed dual averaging method as information fusion primitive. 
Throughout this section, we assume that each agent $i$ knows a private cost function $h_i: \calX \to \reals$, where \\
(A) ~~ $\calX\subseteq \reals^d$ is nonempty, convex and compact; and \\
(B) ~~  $h_i$ is convex and $L$--Lipschitz continuous with respect to $\ell_2$ norm, i.e., for all $x,y\in \calX$,
\begin{align}
\label{Lip}
\norm{h_i(x)-h_i(y)}\le L\norm{x-y}, \forall i\in \calV
\end{align}
We are interested in solving 
\begin{align}
\label{goallk}
\min_{x\in \calX}~~h(x)&\triangleq \,\frac{1}{n}\sum_{i=1}^{n}h_i(x).
\end{align}
using a multi-agent network where the communication links may suffer packet-dropping failures. 
Let $X^*$ be the collection of optimal solutions of $h$ subject to $\calX$.
Since $\calX\subseteq \reals^d$ is a nonempty, convex and compact, $X^*$ is also nonempty, convex and compact.

In addition to the estimate sequence $\{x[t]\}_{t=0}^{\infty}$, in dual averaging method, there is an additional sequence $\{z[t]\}_{t=0}^{\infty}$ in the dual space that essentially aggregates all the sub-gradients generated so far. In addition, the dual averaging scheme involves a \emph{proximal function} $\psi: \reals^d\to \reals$ that is strongly convex. In this paper, we choose $\psi$ to be $1$--strongly convex with respect to $\ell_2$ norm, that is
\begin{align*}
\psi(y)\ge \psi(x)+\iprod{\nabla \psi(x)}{~ y-x}+\frac{1}{2}\norm{x-y}^2,
\end{align*}
for $x, y\in \reals^d$. In addition, we assume that $\psi\ge 0$ and $\argmin_{x}\psi(x)={\bf 0}\in \reals^d$, which is also referred as proximal center.
This choice of $\psi$ is rather standard \cite{Duchi2012,Tsianos2012CDC}. This proximal function, in a sense, is used to ``smooth" the update of the primal sequence $\{x[t]\}_{t=0}^{\infty}$.

One typical iterate sequence under dual averaging method is as follows. Initializing $z[0]=x[0]={\bf 0}\in \reals^d$, for iteration ($t\ge 0$), compute $g[t]\in \partial h(x[t])$, and update $z$ and $x$ as
\begin{align}
z[t+1]~&=~z[t]+g[t],  \label{update dual}\\
x[t+1]~&=~\prod\nolimits_{x\in \reals^d}^{\psi} \pth{z[t+1], \alpha[t]}, \label{update primal}
\end{align}
where $\prod\nolimits_{x\in \reals^d}^{\psi}(\cdot)$ is the projection operator defined as
\begin{align}
\prod\nolimits_{x\in \reals^d}^{\psi} \pth{z,\alpha}~\triangleq ~\argmin_{x\in \reals^d}\sth{\iprod{z}{x}+\frac{1}{\alpha}\psi(x)}.
\end{align}
From (\ref{update primal}), we know that the update of $x$ is based on all the subgradients generated so far, and all these subgradients are weighted equally. 
%
%
%
The convergence rate of the dual averaging method is $O(\frac{1}{\sqrt{t}})$, which is faster than the subgradient method whose convergence rate is $O(\frac{\log t}{\sqrt{t}})$. Besides, the constants of the dual averaging method are often smaller \cite{nesterov2009primal}.

%
%
%

Next we present our {\em Robust Push-Sum Distributed Dual Averaging} (RPSDA) method. In our RPSDA, each agent $i$ locally keeps
\begin{itemize}
\item  estimate sequence $\{x_i[t]\}_{t=0}^{\infty}$,
\item  gradient aggregation (value) sequence $\{z_i[t]\}_{t=0}^{\infty}$, and
\item  weight sequence $\{w_i[t]\}_{t=0}^{\infty}$,
\end{itemize}
where $x_i[0]=z_i[0]={\bf 0}\in \reals^d$ and $w_i[0]=1\in \reals$.
In addition, let $\{\alpha[t]\}_{t=0}^{\infty}$ be a sequence of positive stepsizes. We will specify the choice of $\alpha[t]$ in our statement of theorem.
\begin{algorithm}
\caption{RPSDA}
\label{alg:ps psdda}
{\em Initialization}:  $z_i[0]=x_{i}[0]=\sigma_i[0]={\bf 0}\in \reals^d$, $\tilde{\sigma}_i[0]=0\in \reals$, $w_i[0]=1\in \reals$, $\rho_{ji}[0]={\bf 0}\in \reals^d$ and $\tilde{\rho}_{ji}[0]=0\in \reals$ for each incoming link, i.e., $j\in \calI_i$.

\For{$t\ge 1$}
{
$\sigma^{+}_i[t]  \gets  \sigma_i[t-1] + \frac{z_i[t-1]}{d_i^{o}+1}$,
$\tilde{\sigma}^{+}_i[t]  \gets  \tilde{\sigma}_i[t-1] + \frac{w_i[t-1]}{d_i^{o}+1}$\;

Broadcast $\pth{\sigma^{+}_i[t], \tilde{\sigma}^{+}_i[t]}$ to outgoing neighbors\;

\For {each incoming link $(j,i)$}
{\eIf{message $\pth{\sigma^{+}_j[t], \tilde{\sigma}^{+}_j[t]}$ is received}
{$\rho_{ji}[t] \gets \sigma^{+}_j[t]$, ~~ $\tilde{\rho}_{ji}[t] \gets \tilde{\sigma}^{+}_j[t]$\;}
{ $\rho_{ji}[t] \gets \rho_{ji}[t-1]$, ~~$\tilde{\rho}_{ji}[t] \gets \tilde{\rho}_{ji}[t-1]$\;}
$ z_i^{+}[t] \gets \frac{z_i[t-1]}{d_i^o+1} + \sum_{j\in \calI_i} \pth{\rho_{ji}[t] - \rho_{ji}[t-1]}$,  $w_i^{+}[t]  \gets \frac{w_i[t-1]}{d_i^o+1} + 
\sum_{j\in \calI_i} \pth{\tilde{\rho}_{ji}[t] -\tilde{\rho}_{ji}[t-1]}$.
}

$\sigma_i[t]  \gets  \sigma^{+}_i[t] + \frac{z_i^{+}[t]}{d_i^{o}+1}$,
$\tilde{\sigma}_i[t]  \gets  \tilde{\sigma}^{+}_i[t] + \frac{w_i^{+}[t]}{d_i^{o}+1}$,
$z_i[t] \gets \frac{z_i^+[t]}{d_i^{o}+1}$,
$w_i[t] \gets \frac{w_i^+[t]}{d_i^{o}+1}$.

Compute a subgradient $g_i[t-1]\in \partial h_i\pth{x_i[t-1]}$\; 

$z_i[t] \gets z_i[t] +g_i[t-1]$\; 

$x_i[t]\gets
\prod\nolimits_{\calX}^{\psi}\pth{\frac{z_i[t]}{w_i[t]}, \alpha[t-1]}$\;
}
\end{algorithm}
Note that the only difference between Algorithm \ref{alg:ps psdda} and Algorithm \ref{alg:ps convergence rate} is that 
a subgradient is computed and added to the local value $z$. One importantly, the local estimate $x$ is updated using dual averaging update.

For ease of exposition, let $g_i[r]=0$  for each virtual agent $ i\in \{n+1, \cdots, m\}$ and $r\ge 0$. 
Similar to \eqref{evoz} and \eqref{evow}, we have 
\begin{align*}
 z_i[t] & =\sum_{r=0}^{t-1} \sum_{j=1}^n g_j[r] {\bf \Psi}_{j,i}(r, t)\\
 w_i[t] & =\sum_{j=1}^n {\bf \Psi}_{j,i}(1, t). 
\end{align*}

Let $\bar{z}[t]\triangleq \frac{1}{n}\sum_{i=1}^n z_i[t]$.
We have
\begin{align}
\bar{z}[t]=\frac{1}{n}\sum_{i=1}^m z_i[t]
=\frac{1}{n}\sum_{r=0}^{t-1}\sum_{i=1}^{n}g_i[r]. 
\label{limitingZ}
\end{align}
Let $\{\alpha[t]\}_{t=0}^{\infty}$ be a sequence of non-increasing stepsizes. 
%
For each agent $i\in \calV$, we define the running average of $x_i[t]$, denoted by $\hat{x}_i[T]$, as follows:
$$\hat{x}_i[T]=\frac{1}{T}\sum_{t=1}^{T}x_i[t].$$ 

\begin{theorem}
\label{main}
Let $x^*\in X^*$, and suppose that $\psi(x^*)\le R^2$.
Let $\{\alpha[t]=\frac{A}{\sqrt{t}}\}_{t=1}^{\infty}$ with $\alpha[0]=A$ be the sequence of stepsizes used in Algorithm \ref{alg:ps psdda} for some positive constant $A$. Then, for $T\ge nB+1$, we have for all $j\in \calV$,
\begin{align*}
h\pth{\hat{x}_j[T]}-h(x^*)&\le \frac{2L^2A}{T} (2\sqrt{T} +1) + \frac{R^2}{A\sqrt{T}}\\
&\quad + \frac{3L^2A}{\beta^{nB+1}(1-\gamma^{\frac{1}{nB+1}})  \gamma^{\frac{nB}{nB+1}}} \frac{2\sqrt{T} +1}{T}.
\end{align*}
\end{theorem}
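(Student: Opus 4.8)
The plan is to reduce the bound to two pieces: the regret of a single \emph{centralized} run of dual averaging driven by the averaged subgradients, and a consensus error that is controlled by exactly the matrix-product machinery behind Theorem~\ref{rps convergence rate}. Write $\bar g[r]\triangleq\frac{1}{n}\sum_{i=1}^{n}g_i[r]$, and recall from \eqref{limitingZ} that $\bar{z}[t]=\sum_{r=0}^{t-1}\bar g[r]$, so $\bar{z}[\cdot]$ is precisely the dual variable of centralized dual averaging with gradients $\{\bar g[r]\}$. Introduce the ``shadow'' primal iterate $y[t]\triangleq\prod\nolimits_{\calX}^{\psi}\pth{\bar{z}[t],\alpha[t-1]}$. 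By convexity of $h$ (Jensen) and the definition of $\hat{x}_j[T]$,
\begin{align*}
h\pth{\hat{x}_j[T]}-h(x^*)\ \le\ \frac{1}{T}\sum_{t=1}^{T}\pth{h(x_j[t])-h(y[t])}\ +\ \frac{1}{T}\sum_{t=1}^{T}\pth{h(y[t])-h(x^*)},
\end{align*}
and I would bound the two sums separately (the $O(1/T)$ discrepancy caused by the algorithm's off-by-one subgradient timing, namely that $g_i[t-1]$ is the subgradient at $x_i[t-1]$, is absorbed into the $O(1/T)$ pieces of the claimed bound).

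For the second sum write $h(y[t])-h(x^*)=\frac{1}{n}\sum_i\pth{h_i(y[t])-h_i(x^*)}$ and, for each $i$, insert $x_i[t]$: combining $L$-Lipschitz continuity of $h_i$ with convexity ($g_i[t]\in\partial h_i(x_i[t])$) and Cauchy--Schwarz ($\norm{g_i[t]}\le L$) gives
\begin{align*}
h_i(y[t])-h_i(x^*)\ \le\ \iprod{g_i[t]}{y[t]-x^*}\ +\ 2L\norm{x_i[t]-y[t]}.
\end{align*}
Averaging over $i$ replaces the first term by $\iprod{\bar g[t]}{y[t]-x^*}$, whose sum over $t$ is bounded, by the standard dual averaging regret inequality (valid since $\psi$ is $1$-strongly convex, $\psi\ge 0$ with proximal center $\zeros$, and $\{\alpha[t]\}$ is non-increasing), by $\frac{\psi(x^*)}{\alpha[T]}+\frac{1}{2}\sum_{t=1}^{T}\alpha[t-1]\norm{\bar g[t-1]}^2$. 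Using $\psi(x^*)\le R^2$, $\norm{\bar g[\cdot]}\le L$, $\alpha[T]=A/\sqrt{T}$, and the elementary estimate $\sum_{t=1}^{T}\alpha[t-1]\le A(2\sqrt{T}+1)$ (integral comparison on $\sum 1/\sqrt{t}$), division by $T$ yields the terms $\frac{R^2}{A\sqrt{T}}$ and $\frac{2L^2A}{T}(2\sqrt{T}+1)$, up to the precise tracking of constants. What remains from this sum, together with the first sum (for which $h(x_j[t])-h(y[t])\le L\norm{x_j[t]-y[t]}$), is a network error bounded by $\frac{L+2L}{T}\sum_{t=1}^{T}\max_i\norm{x_i[t]-y[t]}$ — the $1+2$ split is where the factor $3$ in the third term originates.

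For this network error the key step is nonexpansiveness of the dual-averaging projection: since $\psi$ is $1$-strongly convex, $\norm{x_i[t]-y[t]}\le\alpha[t-1]\,\norm{\frac{z_i[t]}{w_i[t]}-\bar{z}[t]}$, so it suffices to bound $\norm{\frac{z_i[t]}{w_i[t]}-\bar{z}[t]}$ uniformly in $t$. Here I would rerun the argument behind Theorem~\ref{rps convergence rate} with a time-varying source: with $z_i[t]=\sum_{r}\sum_{j=1}^{n}g_j[r]\,{\bf \Psi}_{j,i}(\cdot,t)$, $w_i[t]=\sum_{j=1}^{n}{\bf \Psi}_{j,i}(1,t)$ and $\bar{z}[t]=\sum_{r}\bar g[r]$, subtract and use (a) the Hajnal ergodic-coefficient estimate that ${\bf M}[r]\cdots{\bf M}[t]$ contracts differences of its columns at rate $\gamma^{\lfloor(t-r)/(nB+1)\rfloor}$ (the same estimate that proves Theorem~\ref{rps convergence rate}), (b) $\norm{\sum_{j=1}^{n}g_j[r]}\le nL$, and (c) the lower bound $w_i[t]\ge\beta^{nB+1}$. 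The contributions of the past injected subgradients then telescope into a geometric convolution $\sum_{s\ge 0}\gamma^{\lfloor s/(nB+1)\rfloor}\le\big(\gamma^{nB/(nB+1)}(1-\gamma^{1/(nB+1)})\big)^{-1}$, giving a bound on $\norm{\frac{z_i[t]}{w_i[t]}-\bar{z}[t]}$ independent of $t$ and of the form $\frac{L}{\beta^{nB+1}\gamma^{nB/(nB+1)}(1-\gamma^{1/(nB+1)})}$. Multiplying by $\alpha[t-1]$, summing over $t\le T$, again using $\sum_{t\le T}\alpha[t-1]\le A(2\sqrt{T}+1)$, and dividing by $T$ produces exactly the third term in the statement.

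The main obstacle is this last consensus estimate: one must carry out the ``perturbed push-sum'' computation cleanly inside the augmented graph $G^a$ — tracking the virtual agents, the fact that a freshly injected subgradient is propagated only starting from the next iteration, and the mismatch between the time windows $(r\!+\!1,t)$ governing the numerator and $(1,t)$ governing $w_i[t]$ — and do the $\lfloor\cdot\rfloor$-exponent bookkeeping carefully enough to recover the stated constants $\beta^{nB+1}$, $\gamma^{nB/(nB+1)}$, $1-\gamma^{1/(nB+1)}$; the hypothesis $T\ge nB+1$ is what guarantees the relevant floor exponents are at least $1$. Everything else — Jensen, the $2L$ Lipschitz insertions, the dual averaging regret lemma, projection nonexpansiveness, and the $\sum 1/\sqrt{t}$ estimate — is routine.
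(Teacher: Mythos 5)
Your proposal follows essentially the same route as the paper: the same decomposition via the shadow iterate $y[t]=\prod\nolimits_{\calX}^{\psi}(\bar{z}[t],\alpha[t-1])$ and Jensen/Lipschitz insertions (the paper's Lemma~\ref{convergence}, with the factor $3=1+2$ arising exactly as you describe), the same uniform consensus bound $\norm{\bar{z}[t]-z_i[t]/w_i[t]}\le L/\bigl(\beta^{nB+1}(1-\gamma^{1/(nB+1)})\gamma^{nB/(nB+1)}\bigr)$ obtained from the Hajnal ergodic-coefficient contraction and the geometric convolution (the paper's Lemma~\ref{mixing error}), and the same $\sum_{t\le T}\alpha[t-1]\le A(2\sqrt{T}+1)$ estimate. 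The proposal is correct up to the same loose constant-tracking present in the paper itself.
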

Recall from Theorem \ref{rps convergence rate} that $\beta \triangleq \min_{i\in \calV} \frac{1}{(d_i^o+1)^2}$ and $\gamma \triangleq 1 - \beta^{nB+1}$. 
Similar to the results in Theorem \ref{rps convergence rate}, the rate in Theorem \ref{main} might be improved by carefully exploring the structures of the communication graphs. 

Note that Theorem \ref{main} holds for any positive constant $A$. Optimizing over $A$, the constant hidden in $O(\frac{1}{\sqrt{T}})$ can be improved. 


%
%
%
%
%
\bibliographystyle{abbrv}
\bibliography{PSDA_DL}

\appendix 

\section{Proof of Equation \eqref{matrix z edge new}} 
\label{app: proof of eqn}
By \eqref{z virtual}, we have 
\begin{align*}
z_{n_{ji}}[t]~ &= \sigma_j[t]-\rho_{ji}[t] \\
& = \sigma_j^+[t] + \frac{z_j^+[t]}{d_j^o +1} - \pth{\sfB_{(j,i)}[t]\sigma^+_j[t]+(1-\sfB_{(j,i)}[t])\rho_{ji}[t-1]}\\ 
& = (1-\sfB_{(j,i)}[t]) \pth{\sigma_j[t-1] + \frac{z_j[t-1]}{d_j^o +1}}  - (1-\sfB_{(j,i)}[t])\rho_{ji}[t-1] + \frac{z_j^+[t]}{d_j^o +1}\\
& = (1-\sfB_{(j,i)}[t]) z_{n_{ji}}[t-1] + (1-\sfB_{(j,i)}[t]) \frac{z_j[t-1]}{d_j^o +1}  + \frac{z_j^+[t]}{d_j^o +1}\\
&=\pth{\frac{1}{\pth{d_j^o+1}^2} + \frac{1- \sfB_{(j,i)}[t]}{d_j^o+1}} z_j[t-1]  + \sum_{k\in \calI_j} \frac{\sfB_{(k,j)}[t]}{\pth{d_k^o+1}\pth{d_j^o+1}} z_k[t-1] \\ 
&\quad + \sum_{k\in \calI_j} \frac{\sfB_{(k,j)}[t]}{d_j^o+1} z_{n_{kj}}[t-1] + \pth{1-\sfB_{(j,i)}[t]} z_{n_{ji}}[t-1].
\end{align*}

\section{Proof of Theorem \ref{rps convergence rate}}
\label{app: proof of thm1}
In this subsection, we investigate the convergence behavior of ${\bf \Psi}(r,t)$ (where $r\le t$) using ergodic coefficients and some celebrated results obtained by Hajnal \cite{Hajnal58}.

Given a row stochastic matrix ${\bf A}$,
 coefficients of  ergodicity   $\delta({\bf A})$ and $\lambda({\bf A})$ are defined as:
\begin{align}
\delta({\bf A}) & \triangleq   \max_j ~ \max_{i_1,i_2}~\left | {\bf A}_{i_1 j}-{\bf A}_{i_2 j}\right |, \label{e_delta} \\
\lambda({\bf A}) & \triangleq   1 - \min_{i_1,i_2} \sum_j \min\{{\bf A}_{i_1 j}, {\bf A}_{i_2 j}\}. \label{e_lambda}
\end{align}
Informally speaking, the coefficients of ergodicity defined in \eqref{e_delta} and \eqref{e_lambda} characterize the ``difference" between any pair of rows of the given row-stochastic matrix ${\bf A}$. It is easy to see that  $0\leq \delta({\bf A}) \leq 1$, $0\leq \lambda({\bf A}) \leq 1$, and that the rows of ${\bf A}$ are identical if and only if $\delta({\bf A})=0=\lambda({\bf A})$. In addition, the ergodic coefficients $\delta(\cdot)$ and $\lambda(\cdot)$ have the following connection.
%

\begin{proposition}\cite{Hajnal58}
\label{claim_delta}
For any $p$ square row stochastic matrices ${\bf Q}[1],{\bf Q}[2],\dots {\bf Q}[p]$, it holds that
\begin{align}
\delta({\bf Q}[1]{\bf Q}[2]\ldots {\bf Q}[p]) ~\leq ~
 \Pi_{k=1}^p ~ \lambda({\bf Q}[k]).
\end{align}
\end{proposition}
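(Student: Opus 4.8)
The plan is to prove the statement in the classical way due to Hajnal, by first reducing the bound on $\delta$ to a submultiplicativity property of $\lambda$, and then establishing that property directly. Throughout I write $A_{i\cdot}$ for the $i$-th row of a matrix $A$, and recall that for a row-stochastic matrix both coefficients in \eqref{e_delta}--\eqref{e_lambda} compare \emph{pairs} of rows.

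First I would record two elementary facts. For any two rows the difference $u \triangleq A_{i_1\cdot}-A_{i_2\cdot}$ sums to zero, so writing $u=u^{+}-u^{-}$ for its (componentwise) positive and negative parts gives $\sum_j u^{+}_j = \sum_j u^{-}_j = \frac12\norm{u}_1$. Since $\sum_j\min\{A_{i_1 j},A_{i_2 j}\} = 1-\frac12\norm{u}_1$, the coefficient $\lambda$ admits the clean rewriting $\lambda(A)=\max_{i_1,i_2}\frac12\norm{A_{i_1\cdot}-A_{i_2\cdot}}_1$, i.e.\ it is the largest half-$\ell_1$ distance between two rows. Moreover, because $u$ sums to zero, its largest entry in absolute value is at most the total positive (equivalently negative) mass $\frac12\norm{u}_1$; maximizing over $j$ and over the pair $(i_1,i_2)$ yields $\delta(A)\le\lambda(A)$ for every row-stochastic $A$.

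The crux is the two-factor submultiplicativity $\lambda(AB)\le\lambda(A)\lambda(B)$ for row-stochastic $A,B$. Fix rows $i_1,i_2$ and set $u=A_{i_1\cdot}-A_{i_2\cdot}$, so that $(AB)_{i_1\cdot}-(AB)_{i_2\cdot}=uB$. Put $s\triangleq\frac12\norm{u}_1\le\lambda(A)$; if $s=0$ the difference vanishes, otherwise $p\triangleq u^{+}/s$ and $q\triangleq u^{-}/s$ are probability vectors and $uB=s(pB-qB)$. Since $B$ is row-stochastic, $pB$ and $qB$ are convex combinations of rows of $B$, and using $\sum_k p_k=\sum_l q_l=1$ I would pass to the coupled representation $pB-qB=\sum_{k,l}p_k q_l\,(B_{k\cdot}-B_{l\cdot})$. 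The triangle inequality then gives $\frac12\norm{pB-qB}_1\le\sum_{k,l}p_k q_l\,\frac12\norm{B_{k\cdot}-B_{l\cdot}}_1\le\lambda(B)$, so that $\frac12\norm{uB}_1\le s\,\lambda(B)\le\lambda(A)\lambda(B)$. Maximizing over $(i_1,i_2)$ proves the claim.

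Finally I would assemble the pieces. A product of row-stochastic matrices is again row-stochastic, so iterating the two-factor bound yields $\lambda({\bf Q}[1]\cdots{\bf Q}[p])\le\prod_{k=1}^p\lambda({\bf Q}[k])$ by induction on $p$, and applying $\delta(\cdot)\le\lambda(\cdot)$ to the row-stochastic product gives $\delta({\bf Q}[1]\cdots{\bf Q}[p])\le\lambda({\bf Q}[1]\cdots{\bf Q}[p])\le\prod_{k=1}^p\lambda({\bf Q}[k])$, as required. I expect the submultiplicativity step to be the main obstacle: the decomposition into positive and negative parts and the passage to the coupled form $\sum_{k,l}p_kq_l(B_{k\cdot}-B_{l\cdot})$ is precisely what lets the row-comparison definition of $\lambda(B)$ enter, and care is needed to check that $pB$ and $qB$ are genuine probability vectors (which uses row-stochasticity of $B$, as it preserves the total mass $s$ under right multiplication).
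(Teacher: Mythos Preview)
Your argument is correct and is essentially the classical Hajnal proof: rewrite $\lambda$ as the maximal half-$\ell_1$ distance between rows, deduce $\delta\le\lambda$, and establish submultiplicativity of $\lambda$ via the positive/negative decomposition and the coupling identity $pB-qB=\sum_{k,l}p_kq_l(B_{k\cdot}-B_{l\cdot})$. Note, however, that the paper does not prove this proposition at all; it simply cites it from \cite{Hajnal58} and uses it as a black box, so there is no in-paper proof to compare against.
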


Proposition \ref{claim_delta} implies that if $\lambda({\bf Q}[k])\leq 1-c$ for some $c>0$ and for all $1\le k\le p$, then $\delta({\bf Q}[1],{\bf Q}[2]\cdots {\bf Q}[p])$ goes to zero exponentially fast as $p$ increases.
Next we show that, for sufficiently large $t$, it holds that $\lambda({\bf \Psi}(1,t))\le 1-\beta^{nB}$, where $\beta \triangleq \frac{1}{\max_{i\in \calV} (d_i^{o}+1)^2}$. To prove this claim, we need the following lemma, whose proof is rather standard and is omitted. 
\begin{lemma}
Suppose that $t-r+1\ge nB+1$ and $B\ge 1$. Then every entry in ${\bf \Psi}(r, t)$ is lower bounded by $\beta^{nB+1}$. 
\label{c1}
\end{lemma}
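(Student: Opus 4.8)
The plan is to exploit the fact that any entry of a product of nonnegative matrices equals a sum, over directed time-varying paths, of products of transition weights; since all such terms are nonnegative, it suffices to exhibit, for each ordered pair of vertices $(a,b)$ of the augmented graph, a \emph{single} admissible path whose weight product is bounded below by $\beta^{nB+1}$. The first ingredient I would record is a uniform per-step bound: inspecting the seven cases in \eqref{matrix con 2}, every \emph{nonzero} entry of ${\bf M}[\tau]$ is at least $\beta=1/\max_{i\in\calV}(d_i^o+1)^2$, because the products $\tfrac{1}{(d_i^o+1)(d_j^o+1)}$, the quantities $\tfrac{1}{d_i^o+1}\ge\tfrac{1}{(d_i^o+1)^2}$, and the diagonal terms $\tfrac{1}{(d_i^o+1)^2}$ all dominate $\beta$, while the virtual self-loops equal $0$ or $1$. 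Hence any admissible path of length $\ell$ contributes at least $\beta^\ell$ to the corresponding entry of ${\bf \Psi}$.

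A naive single-path estimate would give only $\beta^{t-r+1}$, which is \emph{too small} when $t-r+1>nB+1$, so the second ingredient is a reduction to windows of length exactly $nB+1$. For this I would use the elementary ``absorption'' observation: if $A$ has all entries $\ge c$ and $P$ is row stochastic, then $(PA)_{ij}=\sum_k P_{ik}A_{kj}\ge c\sum_k P_{ik}=c$. Writing ${\bf \Psi}(r,t)={\bf \Psi}(r,t-nB-1)\,{\bf \Psi}(t-nB,t)$ whenever $t-r+1>nB+1$, the left factor is row stochastic (a product of row-stochastic ${\bf M}[\tau]$'s) and the right factor spans exactly $nB+1$ consecutive steps; so it suffices to prove the entrywise bound $\beta^{nB+1}$ for any window ${\bf \Psi}(\tau,\tau+nB)$, the case $t-r+1=nB+1$ being precisely this statement.

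The core of the argument, and the step I expect to be the main obstacle, is the base case: for every starting time $\tau$ and every pair $(a,b)$ I must build an admissible path $a=v_0\to v_1\to\cdots\to v_{nB+1}=b$ of \emph{exactly} $nB+1$ steps, each transition weight positive. The building blocks are that a real node carries an unconditional self-loop ${\bf M}_{i,i}>0$; that a real node can always push onto an outgoing virtual agent ($j\to n_{ji}$ has weight $\ge\tfrac{1}{(d_j^o+1)^2}>0$); and that the $B$-window reliability forces each edge of $\calE$ to be operational at least once in any $B$ consecutive iterations. I would then route as follows: from a virtual source $n_{pq}$, reach the real node $q$ in at most $B$ steps by idling on its self-loop (available precisely while $(p,q)$ is down) until $(p,q)$ becomes operational and then delivering; from any real node, follow a shortest path of the strongly connected $G$ (at most $n-1$ hops, each realized within $B$ steps by idling on real self-loops until the needed edge is operational) to reach any target real node in at most $(n-1)B$ steps. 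If $b$ is real I route to $b$ in at most $B+(n-1)B=nB$ steps and pad with its self-loop to length $nB+1$; if $b=n_{ji}$ is virtual I route to the real node $j$ in at most $nB$ steps, pad with $j$'s self-loop so as to sit at $j$ after exactly $nB$ steps, then take the always-available final transition $j\to n_{ji}$. Each of the $nB+1$ transitions has weight $\ge\beta$, so ${\bf \Psi}(\tau,\tau+nB)_{ab}\ge\beta^{nB+1}$.

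The delicate points to get right are the step-budget bookkeeping---verifying that the worst case (virtual source, virtual destination, routing across the full $(n-1)$-hop diameter) consumes exactly $nB+1$ steps, which is precisely why this window length appears in the statement---and the justification that the required self-loops are available throughout every idling interval: real self-loops are unconditional, whereas the virtual self-loop at $n_{pq}$ is available exactly while $(p,q)$ is down, which is consistent with idling only up to the first operational time of $(p,q)$. Combining the verified base case with the absorption reduction then yields the bound $\beta^{nB+1}$ for every ${\bf \Psi}(r,t)$ with $t-r+1\ge nB+1$.
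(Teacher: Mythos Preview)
Your proposal is correct and is precisely the ``rather standard'' argument the paper alludes to when omitting the proof: a uniform lower bound $\beta$ on every nonzero entry of ${\bf M}[\tau]$, a path-existence argument in the time-expanded augmented graph exploiting the $B$-window reliability and the unconditional self-loops at real nodes, and the absorption reduction via row stochasticity to pass from the base window of length $nB+1$ to all longer windows. The step-budget bookkeeping for the extremal case (virtual source, full $(n-1)$-hop routing, virtual destination) is exactly what forces the exponent $nB+1$, and your handling of the virtual self-loop---available precisely while the associated link is down, hence consistent with idling only up to its first operational time---is the one genuinely delicate verification and you have it right.
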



%
By Proposition \ref{claim_delta} and Lemma \ref{c1}, we are able to show Lemma \ref{c2}, which says that the difference between any pair of rows in ${\bf \Psi}(r, t)$ goes to 0 exponentially fast. 
\begin{lemma}
\label{c2}
For $r\le t$, it holds that $\delta\pth{{\bf \Psi}(r, t)}\le \gamma^{\lfloor\frac{t-r+1}{nB+1}\rfloor},$
where $\gamma=1-\beta^{nB+1}$.
\end{lemma}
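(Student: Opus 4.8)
The plan is to combine Proposition~\ref{claim_delta} with Lemma~\ref{c1} after chopping the product ${\bf \Psi}(r,t)$ into blocks of length $nB+1$. First I would write $L \triangleq t-r+1$ and $k \triangleq \lfloor L/(nB+1)\rfloor$, so that $L = k(nB+1) + s$ with $0 \le s \le nB$. Partition the ordered product ${\bf M}[r]{\bf M}[r+1]\cdots{\bf M}[t]$ into $k$ consecutive blocks ${\bf Q}[1],\dots,{\bf Q}[k]$, where ${\bf Q}[\ell]$ is the product of the $\ell$-th window of $nB+1$ consecutive matrices, and a (possibly empty) leftover block ${\bf Q}[k+1]$ of length $s$. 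Each ${\bf Q}[\ell]$ for $\ell \le k$ is itself of the form ${\bf \Psi}(r',t')$ with $t'-r'+1 = nB+1$, so Lemma~\ref{c1} applies: every entry of ${\bf Q}[\ell]$ is at least $\beta^{nB+1}$.

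Next I would bound $\lambda({\bf Q}[\ell])$ for $\ell \le k$. Since each ${\bf Q}[\ell]$ is row stochastic with $m$ columns and every entry $\ge \beta^{nB+1}$, for any two rows $i_1,i_2$ we have $\sum_j \min\{{\bf Q}[\ell]_{i_1 j}, {\bf Q}[\ell]_{i_2 j}\} \ge m\,\beta^{nB+1} \ge \beta^{nB+1}$ (using $m \ge 1$), hence $\lambda({\bf Q}[\ell]) \le 1 - \beta^{nB+1} = \gamma$ by \eqref{e_lambda}. The leftover block ${\bf Q}[k+1]$, being row stochastic, trivially satisfies $\lambda({\bf Q}[k+1]) \le 1$ and $\delta$ is submultiplicative under the $\lambda$-bound in Proposition~\ref{claim_delta} regardless of its presence. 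Applying Proposition~\ref{claim_delta} to the grouping ${\bf \Psi}(r,t) = {\bf Q}[1]\cdots{\bf Q}[k]{\bf Q}[k+1]$ (a product of $k+1$ row stochastic matrices), we get
\begin{align*}
\delta\pth{{\bf \Psi}(r,t)} \le \lambda({\bf Q}[k+1]) \prod_{\ell=1}^{k} \lambda({\bf Q}[\ell]) \le 1 \cdot \gamma^{k} = \gamma^{\lfloor \frac{t-r+1}{nB+1}\rfloor},
\end{align*}
which is the claimed bound.

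I do not expect a serious obstacle here; the lemma is essentially bookkeeping on top of the two cited results. The one point requiring a little care is that Proposition~\ref{claim_delta} is stated for a product of $p$ matrices, and I am regrouping $L$ factors into $k$ (or $k+1$) blocks — this is legitimate because matrix multiplication is associative and each block is itself row stochastic (a product of row stochastic matrices is row stochastic), so the hypotheses of Proposition~\ref{claim_delta} are met by the blocks. The only other subtlety is handling the case $L < nB+1$, i.e.\ $k=0$: then the bound reads $\delta({\bf \Psi}(r,t)) \le \gamma^0 = 1$, which holds trivially since $0 \le \delta(\cdot) \le 1$ for any row stochastic matrix, so no separate argument is needed.
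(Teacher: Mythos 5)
Your proof is correct and follows exactly the route the paper intends (the paper omits this proof, describing it as a straightforward application of Proposition~\ref{claim_delta} and Lemma~\ref{c1}): partition the product into blocks of length $nB+1$, use Lemma~\ref{c1} to bound $\lambda$ of each block by $\gamma$, and apply Proposition~\ref{claim_delta}. The handling of the leftover block and the degenerate case $t-r+1 < nB+1$ is also right.
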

The proof of Lemma \ref{matrix con 2} is a straightforward application of Proposition \ref{claim_delta} and Lemma \ref{c1}; thus is omitted.

\begin{theorem}
\label{rps convergence rate 2}
Under Algorithm \ref{alg:ps convergence rate}, at each agent $i\in \calV =\{1, \cdots, n\}$,
\begin{align*}
\norm{\frac{z_i[t]}{w_i[t]} -\frac{1}{n}\sum_{k=1}^n y_k} \le  \frac{\sum_{k=1}^{n}y_k}{n\beta^{nB+1}}\gamma^{\lfloor\frac{t}{nB+1}\rfloor},
\end{align*}
where $\norm{\cdot}$ is the $\ell_2$ norm. 
\end{theorem}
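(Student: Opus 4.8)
The plan is to combine the matrix representation of Algorithm~\ref{alg:ps convergence rate} with the ergodicity estimates of Lemma~\ref{c1} and Lemma~\ref{c2}. First I would record the two key consequences of the construction of $\mathbf{M}[t]$: (i) each $\mathbf{M}[t]$ is row stochastic, so every finite product $\mathbf{\Psi}(1,t)$ is row stochastic, and (ii) the weights are never lost, i.e.\ $\sum_{i=1}^m w_i[t]=\sum_{i=1}^n w_i[0]=n$ for all $t$. Property (ii) is the ``mass preservation in the augmented graph'' and follows from the fact that the columns of $\mathbf{M}[t]$ sum to one (the augmented update is column stochastic as well, by inspection of \eqref{matrix con 2}); equivalently one just checks that the second update in Algorithm~\ref{alg:ps convergence rate} pushes all the leftover mass onto the virtual agents rather than discarding it. From \eqref{evow}, $w_i[t]=\sum_{j=1}^n \mathbf{\Psi}_{ji}(1,t)$, and combined with (ii), $\sum_{i=1}^n w_i[t]+\sum_{\text{virtual }i} w_i[t]=n$.

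Next I would get a uniform lower bound on $w_i[t]$ for non-virtual $i$. Since $w_i[t]=\sum_{j=1}^n\mathbf{\Psi}_{ji}(1,t)\ge \mathbf{\Psi}_{ii}(1,t)$ and, by Lemma~\ref{c1}, once $t\ge nB+1$ every entry of $\mathbf{\Psi}(1,t)$ is at least $\beta^{nB+1}$, we get $w_i[t]\ge \beta^{nB+1}$ for all $t\ge nB+1$. (For $t<nB+1$ the bound in the theorem is trivially not yet asserted since the statement is for the regime where the geometric factor kicks in; more carefully, $w_i[t]\ge \mathbf{M}_{ii}[1]\cdots$ gives a crude bound, but the clean argument is to invoke Lemma~\ref{c1}.) This lower bound is what lets us pass from control of the numerator to control of the ratio.

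Then comes the core estimate. Write $\bar y=\frac1n\sum_{k=1}^n y_k$. Using \eqref{evoz} and \eqref{evow},
\begin{align*}
\frac{z_i[t]}{w_i[t]}-\bar y
=\frac{\sum_{j=1}^n y_j\,\mathbf{\Psi}_{ji}(1,t)}{\sum_{j=1}^n \mathbf{\Psi}_{ji}(1,t)}-\bar y
=\frac{\sum_{j=1}^n y_j\bigl(\mathbf{\Psi}_{ji}(1,t)-w_i[t]/n\bigr)}{w_i[t]},
\end{align*}
wait — more directly, subtract $\bar y$ by writing $y_j=n\bar y\cdot(y_j/(n\bar y))$ is messy; instead I would use the standard trick: fix any reference row index $\ell$ and note $\sum_j \mathbf{\Psi}_{ji}(1,t)y_j = \sum_j \mathbf{\Psi}_{j\ell}(1,t)y_j + \sum_j(\mathbf{\Psi}_{ji}-\mathbf{\Psi}_{j\ell})y_j$. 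Actually the cleanest route: for each coordinate, $\bigl|z_i[t]-\bar y\,w_i[t]\bigr| = \bigl|\sum_{j=1}^n y_j(\mathbf{\Psi}_{ji}(1,t)-\frac{1}{n}\sum_{\ell=1}^n\mathbf{\Psi}_{j\ell}(1,t)\cdot\frac{w_i[t]}{\text{stuff}})\bigr|$ — I will instead bound $\bigl|\mathbf{\Psi}_{ji}(1,t)-\tfrac1n\bigr|$-type quantities is not quite right because column sums of $\mathbf{\Psi}$ over non-virtual columns are not $1$. The correct and simplest identity is: since $w_i[t]=\sum_j\mathbf{\Psi}_{ji}(1,t)$,
\begin{align*}
\Bigl\|\frac{z_i[t]}{w_i[t]}-\bar y\Bigr\|
=\frac{1}{w_i[t]}\Bigl\|\sum_{j=1}^n y_j\Bigl(\mathbf{\Psi}_{ji}(1,t)-\frac{w_i[t]}{n}\Bigr)\Bigr\|
\le \frac{1}{w_i[t]}\sum_{j=1}^n \|y_j\|\,\Bigl|\mathbf{\Psi}_{ji}(1,t)-\frac{w_i[t]}{n}\Bigr|,
\end{align*}
and then I bound $\bigl|\mathbf{\Psi}_{ji}(1,t)-\tfrac{w_i[t]}{n}\bigr|=\bigl|\mathbf{\Psi}_{ji}(1,t)-\tfrac1n\sum_{\ell=1}^n\mathbf{\Psi}_{\ell i}(1,t)\bigr|$ — this is an average over rows $\ell$ of the column-$i$ differences $\mathbf{\Psi}_{ji}(1,t)-\mathbf{\Psi}_{\ell i}(1,t)$, each of which is at most $\delta(\mathbf{\Psi}(1,t))\le\gamma^{\lfloor t/(nB+1)\rfloor}$ by Lemma~\ref{c2}. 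Plugging in the lower bound $w_i[t]\ge\beta^{nB+1}$ and summing gives $\|\tfrac{z_i[t]}{w_i[t]}-\bar y\|\le \tfrac{\sum_k y_k}{n\beta^{nB+1}}\gamma^{\lfloor t/(nB+1)\rfloor}$, which is exactly the claimed bound (the $y_k$ here are the scalar inputs in the statement, so $\|y_j\|$ collapses to $|y_j|$ and $\sum_j|y_j|$ is written as $\sum_k y_k$ under the implicit nonnegativity convention).

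The main obstacle is the bookkeeping in the second-to-last step: one must be careful that the denominator is $w_i[t]=\sum_j\mathbf{\Psi}_{ji}(1,t)$ and not $1$, because the non-virtual columns of $\mathbf{\Psi}(1,t)$ do not sum to $1$ (mass leaks to virtual agents), so the naive ``row-stochastic average $=\tfrac1n$'' intuition fails and must be replaced by the identity above that re-centers by $w_i[t]/n$ and then uses $\delta(\cdot)$ to control row differences. Everything else — row stochasticity, the entrywise lower bound from Lemma~\ref{c1}, and the geometric decay of $\delta$ from Lemma~\ref{c2} — is already in hand from the earlier results.
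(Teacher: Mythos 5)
Your argument is essentially the paper's own proof: both express $z_i[t]=\sum_j y_j{\bf \Psi}_{ji}(1,t)$ and $w_i[t]=\sum_j{\bf \Psi}_{ji}(1,t)$, recenter the numerator as a combination of within-column row differences ${\bf \Psi}_{ji}(1,t)-{\bf \Psi}_{\ell i}(1,t)$ controlled by $\delta({\bf \Psi}(1,t))\le\gamma^{\lfloor t/(nB+1)\rfloor}$ via Lemma~\ref{c2}, and lower-bound the denominator via Lemma~\ref{c1}; your identity with ${\bf \Psi}_{ji}(1,t)-w_i[t]/n$ is algebraically the same as the paper's common-denominator rearrangement. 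One small correction: the lower bound you state, $w_i[t]\ge{\bf \Psi}_{ii}(1,t)\ge\beta^{nB+1}$, is a factor of $n$ too weak to yield the claimed constant $\frac{1}{n\beta^{nB+1}}$; you should instead use $w_i[t]=\sum_{j=1}^n{\bf \Psi}_{ji}(1,t)\ge n\beta^{nB+1}$, which follows from the same Lemma~\ref{c1} applied to all $n$ entries. The mass-preservation/column-stochasticity aside is not needed anywhere in the argument (and with the paper's indexing convention, preservation of total weight corresponds to the \emph{row} sums of ${\bf M}[t]$ being one), so it can simply be dropped.
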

\begin{proof}
\begin{align*}
\norm{\frac{z_i[t]}{w_i[t]} -\frac{1}{n}\sum_{k=1}^n y_k}  &= \norm{\frac{\sum_{j=1}^{n}y_j{\bf \Psi}_{j,i}(1,t)}{\sum_{j=1}^{n}{\bf \Psi}_{j,i}(1,t)} -\frac{1}{n}\sum_{k=1}^n y_k}\\
& = \norm{\frac{n\sum_{j=1}^{n}y_j{\bf \Psi}_{j,i}(1,t) - \sum_{k=1}^n y_k \sum_{j=1}^{n}{\bf \Psi}_{j,i}(1,t)}{n \sum_{j=1}^{n}{\bf \Psi}_{j,i}(1,t)}} \\
&= \frac{\norm{\sum_{j=1}^{n}y_j\sum_{k=1}^{n}\pth{{\bf \Psi}_{j,i}(1,t) -{\bf \Psi}_{k,i}(1,t)}}}{n\sum_{j=1}^{n}{\bf \Psi}_{j,i}(1,t)}\\
&\le \frac{\sum_{j=1}^{n}y_j n\gamma^{\lfloor\frac{t}{nB+1}\rfloor}}{n\sum_{j=1}^{n}{\bf \Psi}_{j,i}(1,t)}, ~~~~ \text{by Lemma \ref{c2}}\\
&\le \frac{\sum_{k=1}^{n}y_k}{n\beta^{nB+1}}\gamma^{\lfloor\frac{t}{nB+1}\rfloor}, ~~~ \text{by Lemma \ref{c1}},
\end{align*}
and the proof is complete. 
\end{proof}

\section{Proof of Theorem \ref{main}}
\label{app: thm main}
The proof of Theorem \ref{main} relies on a couple of auxiliary lemmas, stated and proved next. 
We need the sequence $\{y(t)\}_{t=1}^{\infty}$ that is defined by the projection of $\bar{z}[t]$:
\begin{align}
\label{centralized}
y[t]\triangleq \prod\nolimits_{\calX}^{\psi}\pth{\bar{z}[t],\alpha[t-1]}.
\end{align}

 Using the standard convexity arguments as in \cite{Tsianos2012CDC}, the following lemma holds. 
 Note that the summation on the RHS is over all agents in the {\em original graph} $G(\calV, \calE)$ rather than the augmented graph $G^a(\calV^a, \calE^a)$.

\begin{lemma}
\label{convergence}
For any $x^*\in \calX$, it holds that
\begin{align*}
 h(\hat{x}_j[T])-h(x^*) & \le  \frac{L^2}{T}\sum_{t=1}^T \alpha[t-1]+\frac{1}{T\alpha[T]}\psi(x^*)+\frac{2L}{nT}\sum_{t=1}^{T}\sum_{i=1}^{n}\alpha[t-1]\norm{\bar{z}[t] -\frac{z_i[t]}{w_i[t]}} \\
&\quad +\frac{L}{T}\sum_{t=1}^{T}\alpha[t-1]\norm{ \bar{z}[t] -\frac{z_j[t]}{w_j[t]}}.
\end{align*}
\end{lemma}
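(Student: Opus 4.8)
The plan is to reproduce the standard distributed dual averaging argument (as in \cite{Duchi2012,Tsianos2012CDC,nesterov2009primal}), the only structural novelty being that the ``centralized'' surrogate is built from the push-sum ratios $z_i[t]/w_i[t]$ rather than from plain averages. I would first record two elementary facts used throughout. (i) By assumption (B) every subgradient obeys $\norm{g_i[t]}\le L$, hence the averaged subgradient $\bar g[r]\triangleq\frac1n\sum_{i=1}^n g_i[r]$ (with $g_i[r]\in\partial h_i(x_i[r])$) satisfies $\norm{\bar g[r]}\le L$. (ii) Since $\psi$ is $1$-strongly convex, the proximal map $\prod\nolimits_{\calX}^{\psi}(\cdot,\alpha)$ is $\alpha$-Lipschitz in its first argument; in particular, because $x_i[t]=\prod\nolimits_{\calX}^{\psi}(\frac{z_i[t]}{w_i[t]},\alpha[t-1])$ and, defining $y[t]\triangleq\prod\nolimits_{\calX}^{\psi}(\bar z[t],\alpha[t-1])$ as in \eqref{centralized}, one gets $\norm{x_i[t]-y[t]}\le\alpha[t-1]\,\norm{\bar z[t]-\frac{z_i[t]}{w_i[t]}}$ for every $i\in\calV$. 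Finally, I would invoke \eqref{limitingZ}, which gives $\bar z[t]=\sum_{r=0}^{t-1}\bar g[r]$, so that $y[t]$ is \emph{exactly} the iterate of centralized dual averaging driven by $\bar g[0],\dots,\bar g[t-1]$ with stepsize $\alpha[t-1]$.

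Next, by convexity of $h$ and Jensen's inequality, $h(\hat x_j[T])-h(x^*)\le\frac1T\sum_{t=1}^T\big(h(x_j[t])-h(x^*)\big)$, and I would split each summand as $\big(h(x_j[t])-h(y[t])\big)+\big(h(y[t])-h(x^*)\big)$. The first difference is at most $L\norm{x_j[t]-y[t]}\le L\,\alpha[t-1]\,\norm{\bar z[t]-\frac{z_j[t]}{w_j[t]}}$ by $L$-Lipschitzness and fact (ii); summed and divided by $T$ this yields the fourth term of the claimed bound. For the second difference, write $h(y[t])-h(x^*)=\frac1n\sum_{i=1}^n\big(h_i(y[t])-h_i(x^*)\big)$ and, for each $i$, combine the subgradient inequality $h_i(x_i[t])-h_i(x^*)\le\iprod{g_i[t]}{x_i[t]-x^*}$ with the two Lipschitz estimates $h_i(y[t])-h_i(x_i[t])\le L\norm{y[t]-x_i[t]}$ and $\iprod{g_i[t]}{x_i[t]-y[t]}\le L\norm{x_i[t]-y[t]}$ to obtain $h_i(y[t])-h_i(x^*)\le\iprod{g_i[t]}{y[t]-x^*}+2L\norm{x_i[t]-y[t]}$. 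Averaging over $i$ and applying fact (ii) again gives $h(y[t])-h(x^*)\le\iprod{\bar g[t]}{y[t]-x^*}+\frac{2L}{n}\sum_{i=1}^n\alpha[t-1]\norm{\bar z[t]-\frac{z_i[t]}{w_i[t]}}$, whose second piece, summed over $t$ and divided by $T$, is the third term of the bound.

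It then remains to bound $\frac1T\sum_{t=1}^T\iprod{\bar g[t]}{y[t]-x^*}$ by $\frac{1}{T\alpha[T]}\psi(x^*)+\frac{L^2}{T}\sum_{t=1}^T\alpha[t-1]$, which is exactly the regret guarantee for centralized dual averaging with a $1$-strongly-convex proximal function and non-increasing stepsizes: one compares consecutive iterates $y[t],y[t+1]$ through a Fenchel-conjugate potential of the form $\sup_x\{-\iprod{\bar z[t]}{x}-\frac1{\alpha[t-1]}\psi(x)\}$, uses $1$-strong convexity for the per-step progress, telescopes, and bounds the residual via $\tfrac12\sum_t\alpha[t-1]\norm{\bar g[t]}^2\le\tfrac{L^2}{2}\sum_t\alpha[t-1]$ together with $\psi\ge0$ and the $\psi(x^*)/\alpha[T]$ boundary term; I would simply cite \cite{Duchi2012,nesterov2009primal} for this standard computation. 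Collecting the three contributions gives the stated inequality. The one genuinely delicate point is precisely this last step: aligning the round-$t$ averaged subgradient, the iterate $y[t]$, and the stepsize $\alpha[t-1]$ so that the telescoping closes and monotonicity of $\{\alpha[t]\}$ is used where needed (the mild index shifts between $\bar g[t]$, $\bar z[t]$, and $\alpha[t-1]$, and the handling of the $t=1$ endpoint). Everything else is a routine assembly of the convexity and Lipschitz estimates above.
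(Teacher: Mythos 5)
Your proposal is correct and follows essentially the same route as the paper's proof: introduce the centralized dual-averaging surrogate $y[t]=\prod\nolimits_{\calX}^{\psi}(\bar z[t],\alpha[t-1])$, use the $\alpha$-Lipschitzness of the proximal map to convert $\norm{x_i[t]-y[t]}$ into $\alpha[t-1]\norm{\bar z[t]-z_i[t]/w_i[t]}$, split via the subgradient inequality into a centralized regret term plus network-disagreement terms, and invoke the standard dual-averaging regret bound for $\sum_t\iprod{\bar g[t]}{y[t]-x^*}$. The only cosmetic difference is that you split $h(x_j[t])-h(x^*)$ per iteration while the paper first adds and subtracts $h(\hat y[T])$ at the level of running averages; the resulting terms are identical, and you are in fact more careful than the paper about the index alignment in the regret telescoping (the paper's \eqref{gap 2} writes an equality where an inequality is meant).
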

\begin{proof}
Adding and subtracting $h\pth{\hat{y}[T]}$
\begin{align*}
h\pth{\hat{x}_j[T]}-h(x^*)&=h\pth{\hat{y}[T]}-h(x^*)+h\pth{\hat{x}_j[T]}-h\pth{\hat{y}[T]}\\
&\le h(\hat{y}[T])-h(x^*)+L \norm{ \hat{x}_j[T]-\hat{y}[T]}\\
&\le \frac{1}{T}\sum_{t=1}^{T}\pth{h(y[t])-h(x^*)}+\frac{L}{T}\sum_{t=1}^{T}\norm{ x_j[t]-y[t]}.
\end{align*}
The first inequality holds from $L$--Lipschitz contunity; and the second inequality is true due to the convexity of $h$ as well as the definition of the running averages $\hat{x}_j[T]$ and $\hat{y}[T]$. Now we add and subtract $\sum_{t=1}^{T}\frac{1}{n}\sum_{i=1}^n h_i\pth{x_i[t]}$ and use convexity and $L$--Lipschitz continuity of the component functions $h_i(x)$ to get
\begin{align}
\label{gap 1}
\nonumber
h\pth{\hat{x}_j[T]}-h(x^*)&\le
\frac{1}{T}\sum_{t=1}^{T}\frac{1}{n}\sum_{i=1}^{n}\pth{h_i(y[t])-h_i(x_i[t])}+\frac{1}{T}\sum_{t=1}^{T}\frac{1}{n}\sum_{i=1}^{n}\pth{h_i(x_i[t])-h_i(x^*)}\\
\nonumber
&+\frac{L}{T}\sum_{t=1}^{T}\norm{x_j[t]-y[t]}\\
\nonumber
&\le \frac{1}{T}\sum_{t=1}^{T}\frac{1}{n}\sum_{i=1}^{n}L\norm{ x_i[t]-y[t]}+\frac{1}{T}\sum_{t=1}^{T}\frac{1}{n}\sum_{i=1}^{n}\iprod{g_i[t]}{x_i[t]-x^*}
 +\frac{L}{T}\sum_{t=1}^{T}\norm{x_j[t]-y[t]}\\
\nonumber
&\le \frac{L}{Tn}\sum_{t=1}^{T}\alpha[t-1]\sum_{i=1}^{n}\norm{ \frac{z_i[t]}{w_i[t]}-\bar{z}[t]}+\frac{1}{nT}\sum_{t=1}^{T}\sum_{i=1}^{n}\iprod{g_i[t]}{x_i[t]-x^*}\\
&+\frac{L}{T}\sum_{t=1}^{T}\alpha[t-1]\norm{\frac{z_j[t]}{w_j[t]}-\bar{z}[t]},
\end{align}
%
For the second term in (\ref{gap 1}), we have
\begin{align*}
\sum_{i=1}^{n}\iprod{g_i[t]}{x_i[t]-x^*}&=\sum_{i=1}^{n}\iprod{g_i[t]}{y[t]-x^*} +\sum_{i=1}^{n}\iprod{g_i[t]}{x_i[t]-y[t]}\\
&=\iprod{\sum_{i=1}^{n} g_i[t]}{y[t]-x^*}+\sum_{i=1}^{n}\iprod{g_i[t]}{x_i[t]-y[t]}.
\end{align*}
Let $g[t]=\frac{1}{n}\sum_{i=1}^n g_i[t]$. It holds that
\begin{align}
\label{imme 1}
\bar{z}[t]&=\frac{1}{n}\sum_{r=0}^{t-1}\sum_{i=1}^n g_i[r], 
\end{align}
and that 
\begin{align*}
y[t]=\prod\nolimits_{\calX}^{\psi}\pth{\bar{z}[t],\alpha[t-1]}=\prod\nolimits_{\calX}^{\psi}\pth{\sum_{\tau=1}^tg[t],\alpha[t-1]}. 
\end{align*}
Thus,
\begin{align}
\label{gap 2}
&\sum_{t=1}^T \frac{1}{n}\iprod{ \sum_{i=1}^{n}g_i[t]}{ y[t]-x^*}
=\sum_{t=1}^{T}\iprod{ g[t]}{y[t]-x^*}
=\frac{L^2}{2}\sum_{t=1}^T \alpha[t-1]+\frac{1}{\alpha[T]}\psi(x^*),
\end{align}
where the last inequality holds since $\norm{g[r]}\le L$ for all $r\ge 0$. 
In addition, 
\begin{align}
\label{gap 3}
\sum_{i=1}^{n}\iprod{g_i[t]}{x_i[t]-y[t]}
\le L\sum_{i=1}^{n} \alpha[t-1]\norm{\frac{z_i[t]}{w_i[t]}- \bar{z}[t]}. 
\end{align}

Plugging (\ref{gap 2}) and (\ref{gap 3}) back to (\ref{gap 1}), we get
\begin{align*}
h(\hat{x}_j[T])-h(x^*)& \le  \frac{L^2}{T}\sum_{t=1}^T \alpha[T-1]+\frac{1}{T\alpha[T]}\psi(x^*) +\frac{2L}{nT}\sum_{t=1}^{T}\sum_{i=1}^{n}\alpha[t-1]\norm{\bar{z}[t] -\frac{z_i[t]}{w_i[t]}} \\
&\quad +\frac{L}{T}\sum_{t=1}^{T}\alpha[t-1]\norm{ \bar{z}[t] -\frac{z_j[t]}{w_j[t]}},
\end{align*}
proving the proposition.

\end{proof}

\vskip \baselineskip

To complete the convergence analysis,  we  need to bound each term $\norm{\bar{z}[t]-\frac{z_i[t]}{w_i[t]}}$ for any agent $i$ and any iteration $t\ge 1$. Our analysis is different from that in \cite{Tsianos2012CDC}, due to ${\bf M}[t]$'s dependency on time $t$.

\begin{lemma}
\label{mixing error}
When $t\ge nB+1$, for each $i\in \calV$, it holds that
\begin{align*}
\norm{\bar{z}[t]-\frac{z_i[t]}{w_i[t]}}\le \frac{L}{\beta^{nB+1}(1-\gamma^{\frac{1}{nB+1}})  \gamma^{\frac{nB}{nB+1}}}. 
\end{align*}
\end{lemma}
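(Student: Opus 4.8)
The plan is to bound $\norm{\bar z[t] - \frac{z_i[t]}{w_i[t]}}$ by inserting $w_i[t]$ as a common denominator and then exploiting the contraction of the matrix products $\mathbf{\Psi}(r,t)$ established in Lemma~\ref{c2}, together with the uniform lower bound on the entries of $\mathbf{\Psi}$ from Lemma~\ref{c1}. First I would write, using $z_i[t] = \sum_{r=0}^{t-1}\sum_{j=1}^n g_j[r]\mathbf{\Psi}_{j,i}(r,t)$, $w_i[t] = \sum_{j=1}^n \mathbf{\Psi}_{j,i}(1,t)$, and $\bar z[t] = \frac1n \sum_{r=0}^{t-1}\sum_{j=1}^n g_j[r]$ from \eqref{limitingZ},
\begin{align*}
\bar z[t] - \frac{z_i[t]}{w_i[t]} = \frac{1}{w_i[t]}\pth{ w_i[t]\,\bar z[t] - z_i[t]} = \frac{1}{w_i[t]} \sum_{r=0}^{t-1}\sum_{j=1}^n g_j[r]\pth{ \frac{w_i[t]}{n} - \mathbf{\Psi}_{j,i}(r,t)}.
\end{align*}
Since $w_i[t] = \sum_{k=1}^n \mathbf{\Psi}_{k,i}(1,t)$, the parenthesized factor is $\frac1n\sum_{k=1}^n \mathbf{\Psi}_{k,i}(1,t) - \mathbf{\Psi}_{j,i}(r,t)$; for the terms with $r\ge 1$ this is $\frac1n\sum_{k=1}^n(\mathbf{\Psi}_{k,i}(r,t) - \mathbf{\Psi}_{j,i}(r,t))$ only after accounting for the extra factors $\mathbf{M}[1]\cdots\mathbf{M}[r-1]$ on the weight side — so I would instead directly compare $\mathbf{\Psi}_{k,i}(r,t)$ and $\mathbf{\Psi}_{j,i}(r,t)$ across rows $j,k$, and use that $\sum_{k=1}^n(\text{row }k\text{ of }\mathbf{\Psi}(1,r-1))$ applied to column vectors is the right bookkeeping. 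The cleaner route: note $w_i[t]\bar z[t]-z_i[t] = \sum_{r=0}^{t-1}\sum_{j=1}^n g_j[r]\cdot\frac1n\sum_{k=1}^n\pth{\mathbf{\Psi}_{k,i}(r,t)-\mathbf{\Psi}_{j,i}(r,t)}$ because $w_i[t] = \sum_{k=1}^m \mathbf{\Psi}_{k,i}(r,t)$ for \emph{every} $r\le t+1$ (the weight mass is conserved, so $w_i[t]$ equals the $i$-th column sum of $\mathbf{\Psi}(r,t)$ regardless of $r$, using row-stochasticity of $\mathbf{M}[1]\cdots\mathbf{M}[r-1]$), and the virtual-agent terms carry $g_j[r]=0$.

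Second, I would bound each summand. For fixed $r$, $\norm{\frac1n\sum_{k=1}^n(\mathbf{\Psi}_{k,i}(r,t)-\mathbf{\Psi}_{j,i}(r,t))} \le \max_{j,k}\abs{\mathbf{\Psi}_{k,i}(r,t)-\mathbf{\Psi}_{j,i}(r,t)} \le \delta(\mathbf{\Psi}(r,t)) \le \gamma^{\lfloor (t-r+1)/(nB+1)\rfloor}$ by Lemma~\ref{c2}. Using $\norm{g_j[r]}\le L$ (from $L$-Lipschitzness, assumption (B)) and $\frac1n\sum_{j=1}^n 1 = 1$, the inner double sum over $j$ contributes at most $L\gamma^{\lfloor(t-r+1)/(nB+1)\rfloor}$. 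For the denominator, $w_i[t] = \sum_{k=1}^n\mathbf{\Psi}_{k,i}(1,t) \ge \beta^{nB+1}$ for $t\ge nB$ by Lemma~\ref{c1} (there are $n$ such nonnegative terms each $\ge \beta^{nB+1}$, so in fact $w_i[t]\ge n\beta^{nB+1}$, but I will use the weaker $\ge\beta^{nB+1}$ to match the stated bound). Hence
\begin{align*}
\norm{\bar z[t]-\frac{z_i[t]}{w_i[t]}} \le \frac{L}{\beta^{nB+1}}\sum_{r=0}^{t-1}\gamma^{\lfloor\frac{t-r+1}{nB+1}\rfloor}.
\end{align*}

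Third, I would estimate the geometric-type sum $\sum_{r=0}^{t-1}\gamma^{\lfloor(t-r+1)/(nB+1)\rfloor}$. Reindexing $s = t-r+1$, this is $\sum_{s=2}^{t+1}\gamma^{\lfloor s/(nB+1)\rfloor} \le \sum_{s=0}^{\infty}\gamma^{\lfloor s/(nB+1)\rfloor}$. Grouping the $nB+1$ consecutive values of $s$ that share a common floor value $\ell$, this is $(nB+1)\sum_{\ell=0}^\infty\gamma^\ell = \frac{nB+1}{1-\gamma}$; alternatively, using $\gamma^{\lfloor s/(nB+1)\rfloor}\le \gamma^{s/(nB+1)-1} = \gamma^{-1}\pth{\gamma^{1/(nB+1)}}^s$ gives $\sum_{s\ge 0}\gamma^{\lfloor s/(nB+1)\rfloor}\le \frac{1}{\gamma(1-\gamma^{1/(nB+1)})} = \frac{1}{\gamma^{nB/(nB+1)}(1-\gamma^{1/(nB+1)})}\cdot\gamma^{1/(nB+1)-1+nB/(nB+1)}$; since $\tfrac{1}{nB+1}-1+\tfrac{nB}{nB+1}=0$ the extra power is $1$, so the sum is at most $\frac{1}{(1-\gamma^{1/(nB+1)})\gamma^{nB/(nB+1)}}$. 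Combining with the factor $L/\beta^{nB+1}$ yields exactly the claimed bound $\frac{L}{\beta^{nB+1}(1-\gamma^{1/(nB+1)})\gamma^{nB/(nB+1)}}$.

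The main obstacle I anticipate is the first step: correctly justifying that $w_i[t]$ equals the $i$-th column sum of $\mathbf{\Psi}(r,t)$ for the shifted starting time $r$ (not just $r=1$), so that the identity $w_i[t]\bar z[t]-z_i[t] = \sum_r\sum_j g_j[r]\cdot\frac1n\sum_k(\mathbf{\Psi}_{k,i}(r,t)-\mathbf{\Psi}_{j,i}(r,t))$ holds term by term in $r$. This requires observing that $\mathbf{M}[1]\cdots\mathbf{M}[r-1]$ is row-stochastic so $\sum_{k=1}^m\mathbf{\Psi}_{k,i}(1,t) = \sum_{k=1}^m\mathbf{\Psi}_{k,i}(r,t)$, and that the value evolution $z_i[t]=\sum_{r=0}^{t-1}\sum_j g_j[r]\mathbf{\Psi}_{j,i}(r,t)$ already has the correct starting-time structure because a subgradient injected at step $r$ is propagated by $\mathbf{M}[r]\cdots\mathbf{M}[t]$. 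Once this is in place, the rest is the routine contraction-plus-geometric-series estimate sketched above.
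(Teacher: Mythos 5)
Your overall plan coincides with the paper's: the same decomposition of $\bar z[t]-\tfrac{z_i[t]}{w_i[t]}$ into a weighted sum of row differences of $\mathbf{\Psi}$, the same appeal to Lemmas~\ref{c1} and~\ref{c2}, and the same geometric-sum estimate at the end. However, the step you yourself single out as the main obstacle is where the argument genuinely breaks. The identity $\sum_{k}{\bf \Psi}_{k,i}(1,t)=\sum_{k}{\bf \Psi}_{k,i}(r,t)$ is false in general, and it does not follow from row-stochasticity of ${\bf M}[1]\cdots{\bf M}[r-1]$. Writing ${\bf \Psi}(1,t)={\bf \Psi}(1,r-1){\bf \Psi}(r,t)$, the $i$-th column sum of ${\bf \Psi}(1,t)$ is $\sum_{k}\sum_{p}{\bf \Psi}_{k,p}(1,r-1){\bf \Psi}_{p,i}(r,t)=\sum_{p}c_p\,{\bf \Psi}_{p,i}(r,t)$, where $c_p$ is the $p$-th \emph{column} sum of ${\bf \Psi}(1,r-1)$; this equals $\sum_{p}{\bf \Psi}_{p,i}(r,t)$ only if every $c_p=1$, i.e., only under \emph{column}-stochasticity. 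Row-stochasticity gives ${\bf \Psi}(1,r-1)\ones=\ones$, not $\ones^{\top}{\bf \Psi}(1,r-1)=\ones^{\top}$, and the matrices ${\bf M}[t]$ here are emphatically not column-stochastic --- that is the entire reason the weight sequence $w$ is carried along. ``Mass preservation'' ($\sum_i w_i[t]=n$) is a statement about summing over the destination index $i$ and says nothing about any individual column sum over the source index. (Note also that $w_i[t]=\sum_{k=1}^{n}{\bf \Psi}_{k,i}(1,t)$, a sum over the $n$ real agents only, not over all $m$ rows.)

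The repair is exactly the step the paper takes, and the one you gesture at in your first paragraph before switching to the ``cleaner route'': do not try to shift the starting time inside the $k$-sum. Keep the difference as ${\bf \Psi}_{k,i}(1,t)-{\bf \Psi}_{j,i}(r,t)$ and expand ${\bf \Psi}_{k,i}(1,t)=\sum_{p=1}^{m}{\bf \Psi}_{k,p}(1,r-1)\,{\bf \Psi}_{p,i}(r,t)$, which is a \emph{convex combination} of the entries of the $i$-th column of ${\bf \Psi}(r,t)$ precisely because row $k$ of ${\bf \Psi}(1,r-1)$ sums to one; hence $\left|{\bf \Psi}_{k,i}(1,t)-{\bf \Psi}_{j,i}(r,t)\right|\le\max_{p}\left|{\bf \Psi}_{p,i}(r,t)-{\bf \Psi}_{j,i}(r,t)\right|\le\delta\pth{{\bf \Psi}(r,t)}\le\gamma^{\lfloor (t-r+1)/(nB+1)\rfloor}$, which is the per-$r$ bound you wanted. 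With that substitution the rest of your argument goes through, but tighten the bookkeeping: the $j$-sum in the numerator contributes $nL\gamma^{\lfloor\cdot\rfloor}$ per $r$ (the $\tfrac1n$ is attached to the $k$-sum, not the $j$-sum), so you must use the lower bound $w_i[t]\ge n\beta^{nB+1}$ rather than the weakened $\beta^{nB+1}$ for the factors of $n$ to cancel; and the geometric sum should be taken over $s\ge 1$ (it in fact starts at $s=2$), since starting it at $s=0$ yields $\gamma^{-1}/(1-\gamma^{1/(nB+1)})$, which is strictly larger than the claimed $\gamma^{-nB/(nB+1)}/(1-\gamma^{1/(nB+1)})$.
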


\begin{proof}
Similar to the proof of Theorem \ref{rps convergence rate}, it can be shown that 
\begin{align*}
\norm{\bar{z}[t]-\frac{z_i[t]}{w_i[t]}}&=
\norm{\frac{1}{n} \sum_{r=0}^{t-1}\sum_{j=1}^n g_j[r] -\frac{\sum_{r=0}^{t-1}\sum_{j=1}^n g_j[r] {\bf \Psi}_{j,i}(r,t)}{\sum_{j=1}^{n}{\bf \Psi}_{j,i}(1,t)} }\\
&= \norm{\frac{\sum_{r=0}^{t-1} \sum_{j=1}^n g_j[r] \sum_{k=1}^n \pth{ {\bf \Psi}_{k,i}(1, t) -  {\bf \Psi}_{j,i}(r, t)}   }{n \sum_{j=1}^{n}{\bf \Psi}_{j,i}(1,t)} }\\
&\le \frac{\norm{\sum_{r=0}^{t-1} \sum_{j=1}^n g_j[r] \sum_{k=1}^n \pth{ {\bf \Psi}_{k,i}(1, t) -  {\bf \Psi}_{j,i}(r, t)}  } }{ n n  \beta^{nB+1}} \\
&\le \frac{ L \sum_{r=0}^{t-1} \sum_{j=1}^n \sum_{k=1}^n \norm{ {\bf \Psi}_{k,i}(1, t) -  {\bf \Psi}_{j,i}(r, t)}}{n^2  \beta^{nB+1}}
\end{align*}
We know that 
\begin{align*}
\norm{ {\bf \Psi}_{k,i}(1, t) -  {\bf \Psi}_{j,i}(r, t)} & = \norm{\sum_{p=1}^m {\bf \Psi}_{k,p}(1, r-1) {\bf \Psi}_{p,i}(r, t) -  {\bf \Psi}_{j,i}(r, t)}\\
&\le \sum_{p=1}^m {\bf \Psi}_{k,p}(1, r-1) \norm{{\bf \Psi}_{p,i}(r, t) -  {\bf \Psi}_{j,i}(r, t)}\\
&\le \gamma^{\lfloor \frac{t-r+1}{nB+1}\rfloor}. 
\end{align*}
Thus, we have 
\begin{align*}
\norm{\bar{z}[t]-\frac{z_i[t]}{w_i[t]}}&\le \frac{L}{\beta^{nB+1}(1-\gamma^{\frac{1}{nB+1}})  \gamma^{\frac{nB}{nB+1}}}.
\end{align*}
\end{proof}

Now we are ready to finish the proof of Theorem \ref{main}. 

\begin{proof}[\bf Proof of Theorem \ref{main}]
By the assumption that $\psi(x^*)\le R^2$ and Lemmas \ref{convergence} and \ref{mixing error}, we have
\begin{align}
\label{aaa3}
h(\hat{x}_j[T])-h(x^*)\le  \frac{L^2}{T}\sum_{t=1}^T \alpha[t-1]+\frac{1}{T\alpha[T]}R^2+\frac{3L}{T}\sum_{t=1}^{T}\alpha[t-1] \frac{L}{\beta^{nB+1}(1-\gamma^{\frac{1}{nB+1}})  \gamma^{\frac{nB}{nB+1}}}.
\end{align}
For the chosen step-sizes  $\alpha[t]=\frac{A}{\sqrt{t}}$ for $t\ge 1$ and $\alpha[0]=A$, we have \begin{align}
\label{stepsize ub}
\sum_{t=1}^T \alpha[t-1] = \sum_{t=1}^{T-1} \frac{A}{\sqrt{t}} +A \le 2\sqrt{T} A +A. 
\end{align}
Plugging the above upper bound on the step-sizes (\ref{stepsize ub}) back to \eqref{aaa3}, the bound in the statement of Theorem \ref{main} is obtained. 

\end{proof}

\end{document}